\documentclass[11pt,letterpaper]{article}

\pdfoutput=1

\PassOptionsToPackage{nosort}{cite}

\usepackage{tikz}
\usepackage{adjustbox}
\usepackage{float}
\usepackage{amssymb}
\usepackage{amsfonts}
\usepackage{amsthm}
\usepackage{dsfont}
\usepackage[T1]{fontenc}
\usepackage{braket}
\usepackage{pgfplots}
\usepackage[eulergreek]{sansmath}
\usepackage[british,mcite]{chet}
\usepackage[font=small,format=hang,labelfont={sf,bf}]{caption}

\usetikzlibrary{arrows.meta, bending, quantikz2}
\pgfplotsset{compat=1.18}

\definecolor{forestgreen}{rgb}{0.0, 0.27, 0.13}
\definecolor{darkblue}{rgb}{0.0, 0.0, 0.55}
\definecolor{darkred}{rgb}{0.55, 0.0, 0.0}
\hypersetup{colorlinks,
           breaklinks=true,
           linkcolor={forestgreen},
           citecolor={darkblue},
           urlcolor={darkred},
           pdftitle={Universality of Quantum Gates in Particle and Symmetry Constrained Subspaces},
           pdfdisplaydoctitle
}

\makeatletter
\g@addto@macro\bfseries{\boldmath}
\makeatother

\newcommand{\lsp}{\hspace{0.5pt}}
\newcommand{\lnsp}{\hspace{-0.5pt}}

\renewcommand{\le}{\leqslant}
\renewcommand{\geq}{\geqslant}
\renewcommand{\leq}{\leqslant}

\newtheorem{definition}{Definition}
\newtheorem{proposition}{Proposition}
\newtheorem{lemma}{Lemma}
\newtheorem{theorem}{Theorem}
\newtheorem{corollary}{Corollary}

\title{Universality of Quantum Gates in Particle and\\[5pt] Symmetry Constrained Subspaces}

\author{Andreas Stergiou\(^{a,}\)\email{andreas.stergiou@kcl.ac.uk} and Nicolas PD Sawaya\(^{b,}\)\email{nicolas@azulenelabs.com}}

\affiliation{\({\!}^{a}\)Department of Mathematics, King's College London, Strand, London WC2R 2LS, United Kingdom\\
\({\!}^{b}\)Azulene Labs, Berkeley, CA 94720, USA}

\abstract{Simulating physical systems on near-term quantum computers often requires preparing states within constrained subspaces, like those with fixed particle number or spin. We use Lie algebraic techniques to prove that hardware-efficient gates are universal for state preparation in these subspaces. The key mechanism is Pauli \(Z\) dressing: commutators of overlapping gates produce Pauli \(Z\) operators on shared qubits, acting as spectator projectors that decompose multi-plane rotations into single-plane generators spanning the full \(\mathfrak{so}(w)\) algebra, where \(w\) is the dimension of the constrained subspace, thereby guaranteeing universality for real state preparation. Adding independent complex phases extends this to \(\mathfrak{su}(w)\), enabling arbitrary complex state preparation. We provide a computationally efficient Jacobian criterion for verifying that a circuit can explore any direction on the target manifold from almost any parameter configuration. Our findings are applicable to many problem areas, including Fermi--Hubbard models, Bose--Hubbard models, and molecular electronic structure. We apply our framework to two physical settings: we prove the completeness of the binary encoded multi-level particles ansatz on the conserved-particle-number subspace, and we construct symmetry-preserving circuits for the fuzzy sphere regularisation of the 3D Ising conformal field theory (CFT). For the latter, we variationally prepare the ground and excited states to extract CFT scaling dimensions.}

\date{May 2026}

\begin{document}

\maketitle

\toc

\section{Introduction}
Quantum computers promise to provide a pathway to explore previously inaccessible regimes in physics, including strongly-coupled systems and complex many-body dynamics. Due to the limited coherence times and gate fidelities of near-term noisy devices, algorithms typically rely on shallow, parametrised quantum circuits. A central challenge in these variational quantum algorithms (VQAs), such as the variational quantum eigensolver (VQE) \cite{Peruzzo:2013bzg, *Tilly:2021jem}, is the design of the state preparation ansatz. The ansatz must be expressive enough to reach the desired physical states while remaining hardware-efficient.

In many physical contexts, the relevant Hilbert space is restricted by conservation laws, such as particle number conservation or total spin. Preparing an arbitrary state within such a constrained subspace usually involves complicated unitaries. For example, controlled gates or operations involving long non-local Pauli mappings, such as Jordan--Wigner strings, can prepare exact superpositions. However, their physical implementation demands a prohibitive number of native device gates, drastically increasing circuit depth. Instead, simpler hardware-efficient gates that omit these non-local strings are often employed, relying entirely on adjacent or nearby interactions.

In this work, we use Lie algebraic techniques to establish a mathematical foundation for universal state preparation with hardware-efficient gates within constrained subspaces. A constrained subspace of dimension \(w\), spanned by \(w\) computational basis states regarded as coordinate axes of \(\mathbb{R}^w\), carries a natural geometric structure: unit-normalised real superpositions live on the sphere \(S^{w-1}\), acted on by the rotation group \(SO(w)\) with Lie algebra \(\mathfrak{so}(w)\). We show in Proposition~\ref{prop:universal_so_w} that hardware-efficient 2-qubit ``multi-plane'' rotation gates, which simultaneously rotate in every plane connecting basis states differing only on the two active qubits, generate the full \(\mathfrak{so}(w)\) algebra. The key mechanism, which we term \emph{Pauli \(Z\) dressing}, is that commutators of overlapping gates produce Pauli \(Z\) operators on shared qubits, which act as spectator projectors and allow the extraction of ``single-plane'' generators rotating between individual pairs of basis states. Connectivity of the graph of constant-particle-number basis states under pairwise swaps then ensures that these generators span the complete algebra, so circuits composed of such gates can universally prepare any real state within the constrained subspace. When each gate is equipped with an independent complex phase, we show in Section~\ref{sec:suw_extension} that the same spectator mechanism extends to the imaginary hopping generators, promoting the algebra from \(\mathfrak{so}(w)\) to the full \(\mathfrak{su}(w)\) and enabling universal complex state preparation.

Unlike standard universality proofs for discrete gate sets, which establish dense subgroups via the Solovay--Kitaev theorem \cite{Kitaev:1997qca, *Dawson:2005blj}, our approach exploits the continuous parametrisation of variational gates. Drawing on the Lie algebra rank condition from quantum control theory \cite{Jurdjevic:1972csl}, we show that spanning the Lie algebra guarantees that finite-depth circuits can reach general elements of the relevant connected Lie group exactly. Theorem \ref{thm:settogroup} below provides the essential mathematical foundation for this. Although this continuous approach has been employed previously \cite{DiVincenzo:1995zz, *Lloyd:1995aql, *Barenco:1995dx, *Deutsch:1995dw}, our focus here is on spanning specific constrained subspaces of a Hilbert space.

While Theorem \ref{thm:settogroup} establishes universal reachability in principle, verifying that a specific, finite-depth ansatz spans the target space in practice is a separate challenge. To partly address this, we provide a computationally efficient local criterion, Lemma~\ref{lemma:jacobian_span}, showing that a minimal circuit of \(w-1\) generators has a Jacobian of rank \(w-1\) at almost every parameter value---meaning the circuit can explore any direction on \(S^{w-1}\) from almost any parameter configuration---provided the Jacobian achieves this rank at a single reference point. Although surjective parametrisations of the target sphere exist, critical points of any circuit map are topologically unavoidable, and for hardware-efficient circuits these can partition the reachable region into disconnected patches. Crossing such boundaries generally requires over-parametrising the circuit.

Our framework applies immediately to any fermionic system with conserved particle number, such as the Fermi--Hubbard model, one of the most widely studied targets for variational quantum simulation~\cite{Wecker:2015fib, *Jiang:2017pyp, *Cade:2020owo}. We present detailed applications to two settings that involve additional algebraic structure beyond the basic constant-Hamming-weight case. In Section~\ref{sec:bempa}, we treat the binary encoded multi-level particles ansatz (BEMPA)~\cite{Bahrami:2024dzf} for bosonic simulation. The BEMPA gate set involves two families of gates (\(\hat{A}\) and \(\hat{B}\)) acting at different binary significance levels; we prove that their commutators activate the same projection mechanism, establishing completeness on the conserved-particle-number subspace (Proposition~\ref{prop:bempa}). In Section~\ref{sec:fuzzy}, we study the fuzzy sphere regularisation of conformal field theories (CFTs)~\cite{Zhu:2022gjc}, specifically the 3D Ising model at criticality. Here the physically relevant subspace carries an additional \(S_z=0\) constraint on the total azimuthal angular momentum. Two-qubit gates alone cannot change orbital occupancies while preserving this symmetry, so we consider 4-qubit gates that simultaneously transfer two excitations between orbitals. We then prove (Proposition~\ref{prop:symmetry_preserving}) that the combined 2- and 4-qubit gate set generates the full orthogonal group on the symmetry-constrained subspace, enabling universal state preparation for this model.

The fuzzy sphere Hamiltonian we examine has a finite-dimensional spectrum whose low-lying eigenvalues map, via the state-operator correspondence, to the scaling dimensions of the 3d Ising CFT. Finding the ground and excited states of this Hamiltonian is closely analogous to quantum chemistry problems, and already developed algorithms like VQE and the variational quantum deflation (VQD) algorithm for excited states~\cite{Higgott:2018dkk} can be readily applied, as can quantum subspace methods~\cite{Klymko:2021brd}. Because exactly diagonalising this Hamiltonian classically becomes unmanageable for large numbers of modes, quantum simulation holds immense value. We demonstrate the practical construction of minimal state-preparation circuits tailored to the model's constraints and extract the ground state and two low-lying excited states.

The 3D Ising model at criticality is a particularly attractive target because high-precision data exist through the numerical conformal bootstrap~\cite{Poland:2018epd}. This provides a rare setting where both strong coupling and reliable independent results coexist, making it an ideal benchmark for quantum simulation: agreement with bootstrap data validates the quantum algorithms, while discrepancies can reveal shortcomings of current devices and guide future improvements.

The remainder of this paper is organised as follows. In Section~\ref{sec:maths} we develop the mathematical framework: we prove the completeness of hardware-efficient gates on constant-Hamming-weight subspaces (Proposition~\ref{prop:universal_so_w}) and provide a computationally efficient Jacobian criterion for verifying that a circuit can explore any direction in the target manifold from almost any point (Lemma~\ref{lemma:jacobian_span}). Section~\ref{sec:bempa} applies this framework to bosonic simulation with binary encoding, establishing the completeness of the BEMPA ansatz (Proposition~\ref{prop:bempa}). Section~\ref{sec:fuzzy} treats the fuzzy sphere regularisation of the 3D Ising CFT: we review the physical setup, construct symmetry-preserving circuits that generate the full orthogonal group on the \(S_z=0\) subspace (Proposition~\ref{prop:symmetry_preserving}), and present variational quantum simulation results for the ground and excited states. In Section~\ref{sec:suw_extension} we show how independent complex phases extend all preceding results from \(\mathfrak{so}(w)\) to \(\mathfrak{su}(w)\). We conclude in Section~\ref{sec:conclusion}. Appendix~\ref{app:decompositions} collects the elementary gate decompositions used in this work.

\section{Lie algebraic framework for constrained state preparation}\label{sec:maths}
Efficient state preparation is a critical subroutine in variational quantum algorithms. In many physical systems of interest, such as electronic structure and the fuzzy sphere regularisation discussed later, the target states are constrained by symmetries like particle number conservation or total spin. While controlled operations could be used to prepare generic superpositions within these constrained subspaces \cite{Arrazola:2021wuo}, they often lead to prohibitively deep circuits on near-term hardware. Instead, hardware-efficient ansatze employing nearest-neighbour or sparsely connected 2-, 3- or 4-qubit gates are preferred.

Here, we provide mathematical results demonstrating how these simpler hardware-efficient gates can achieve universal state preparation within constrained subspaces, drawing on the Lie algebraic properties of the gates. We focus on the subspace of states of constant Hamming weight in this section, and we provide extensions to further cases in the application sections below. We formulate our essential theoretical results as propositions to guide the construction of suitable quantum circuits.

\subsection{Completeness of hardware-efficient ansatze}
Suppose we want to generate an arbitrary superposition in a subspace of constant Hamming weight \(n_{\text{q}}\) using \(N_{\text{q}}\) qubits. The dimension of this subspace is 
\begin{equation}
  w = \binom{N_{\text{q}}}{n_{\text{q}}}\,.
\end{equation}
Let us think of these \(w\) basis states as the axes \(\mathbf{e}_0, \mathbf{e}_1,\ldots,\mathbf{e}_{w-1}\) of \(\mathbb{R}^w\). A unit-normalised, time-reversal symmetric state can be represented by a vector \(\mathbf{v}\) in \(\mathbb{R}^w\) that can be written as a linear combination of these \(w\) states with real coefficients whose squares add up to unity:
\begin{equation}
    \mathbf{v}=v_0\mathbf{e}_0+v_1\mathbf{e}_1+\cdots+v_{w-1} \mathbf{e}_{w-1}\,,\qquad v_0^2+v_1^2+\cdots+v_{w-1}^2=1\,.
\end{equation}
In other words, the states we are interested in live on the unit sphere \(S^{w-1}\) in \(\mathbb{R}^w\). To navigate this subspace, we must apply operations from the special orthogonal group \(SO(w)\), which has dimension \(w(w-1)/2\).

In this geometric language, the question we're interested in is how to get to a generic location on \(S^{w-1}\) starting from a specified location, e.g.\ \(\mathbf{e}_0\). Of course this is possible with the use of \(w\times w\) orthogonal matrices with unit determinant (elements of \(SO(w)\)). However, we would like to achieve this without using gates that grow with \(w\).

A common approach is the use of Givens rotations,
\begin{equation}\label{eq:GivRot}
  G(\theta)=
  \begin{pmatrix}
    1 & 0 & 0 & 0\\
    0 & \cos\theta & -\sin\theta & 0\\
    0 & \sin\theta & \cos\theta & 0\\
    0 & 0 & 0 & 1
  \end{pmatrix},
\end{equation}
which are understood to be acting on two qubits only, and we use the ordering \(|00\rangle, |01\rangle, |10\rangle, |11\rangle\) for the corresponding column of basis states on which the matrix \eqref{eq:GivRot} acts. The two qubits on which the Givens rotation is acting need to be specified, for example,
\begin{subequations}\label{eq:givens_ex}
  \begin{align}
    G_{01}(\theta)\,|010101\rangle &= \cos\theta\,|010101\rangle-\sin\theta\, |100101\rangle\,, \label{eq:givens_ex_1} \\
    G_{02}(\theta)\,|010101\rangle &= |010101\rangle\,, \label{eq:givens_ex_2} \\
    G_{12}(\theta)\,|010101\rangle &= \cos\theta\,|010101\rangle+\sin\theta\, |001101\rangle\,, \label{eq:givens_ex_3}
  \end{align}
\end{subequations}
where the first qubit is indexed by 0, the second by 1, and so on.

From a geometric perspective, a Givens rotation can be thought of as a rotation in the plane defined by two axes. For example, if \(\mathbf{e}_0\) is the axis representing \(|010101\rangle\) and \(\mathbf{e}_1\) the axis representing \(|001101\rangle\), then \eqref{eq:givens_ex_3} represents a rotation by angle \(\theta\) in the plane defined by \(\mathbf{e}_0,\mathbf{e}_1\):
\vspace{-5pt}
\begin{figure}[H]
  \centering
  \begin{tikzpicture}
    \draw[-{Latex[scale=1.25]}] (0,0)--(2,0) node[at end, below=1pt] {\(\mathbf{e}_0\)};
    \draw[-{Latex[scale=1.25]}] (0,0)--(0,2) node[at end, left=-1.5pt] {\(\mathbf{e}_1\)};
    \draw[-{Stealth[scale=1.25]}] (0,0)--+(40:2);
    \draw[-{Stealth[bend, scale=0.85]}] (0.4,0) arc[start angle=0,end angle=40,x radius=0.4cm,y radius=0.4cm] node[pos=0.75, right=-1pt] {\(\theta\)};
  \end{tikzpicture}
\end{figure}
\vspace{-15pt}

The essential complication is that a Givens rotation acts simultaneously across all possible configurations of the spectator qubits. At some point, this will result in a rotation in more than one plane \emph{simultaneously}, for example if \(G_{45}\) is applied on the last of \eqref{eq:givens_ex}. One can avoid this with the use of controlled gates, but the increased circuit depth renders them undesirable. As we will see, efficient quantum circuits can be constructed leveraging the algebraic structure of the generators of multi-plane rotations: their commutators produce intermediate \(Z\) operators that act as projectors on the spectator qubits, allowing one to isolate single-plane rotations.

Let us start with a fundamental theorem.\footnote{We thank Petr Kravchuk for suggesting the connection to Yamabe's theorem presented in the proof.}
\begin{theorem}\label{thm:settogroup}
  Let \(G\) be a connected Lie group with Lie algebra \(\mathfrak{g}\), and let \(S \subseteq \mathfrak{g}\) be a subset that generates \(\mathfrak{g}\) as a Lie algebra. Furthermore, let \(H \le G\) be the subgroup generated by the set \(\{\exp(tX)\colon t \in \mathbb{R}, X \in S\}\). Then, \(H = G\).
\end{theorem}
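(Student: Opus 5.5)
Our approach is to use Yamabe's theorem: every path-connected subgroup of a Lie group is an immersed Lie subgroup, and so it has a well-defined Lie algebra. \(H\) is path-connected. Each generator \(\exp(tX)\), \(X\in S\), lies on the continuous path \(s\mapsto \exp(stX)\), \(s\in[0,1]\), which starts at the identity. A general element of \(H\) is a finite product \(g=\exp(t_1X_1)\cdots\exp(t_kX_k)\) with \(X_i\in S\); inverses need no separate treatment, since \(\exp(tX)^{-1}=\exp(-tX)\). Such a product is joined to the identity by the path \(s\mapsto \exp(st_1X_1)\cdots\exp(st_kX_k)\). By Yamabe's theorem, \(H\) therefore carries a Lie group structure for which the inclusion \(H\hookrightarrow G\) is an injective immersion. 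We write \(\mathfrak{h}\subseteq\mathfrak{g}\) for its Lie algebra. It is characterised as the set of \(X\in\mathfrak{g}\) such that \(\exp(tX)\in H\) for all \(t\), and it is a Lie subalgebra of \(\mathfrak{g}\).

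The next step is to show \(\mathfrak{h}=\mathfrak{g}\). By construction every \(X\in S\) satisfies \(\exp(tX)\in H\) for all real \(t\). A smooth one-parameter subgroup of \(G\) whose image lies in an immersed subgroup \(H\) is also a continuous, hence smooth, one-parameter subgroup of \(H\). This uses the standard fact that a continuous map into an integral manifold of the involutive distribution defined by \(\mathfrak{h}\) is continuous for the leaf topology. Hence \(S\subseteq\mathfrak{h}\). Since \(\mathfrak{h}\) is a Lie subalgebra and \(S\) generates \(\mathfrak{g}\), we obtain \(\mathfrak{g}\subseteq\mathfrak{h}\), and so \(\mathfrak{h}=\mathfrak{g}\).

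Finally, \(H\) is a connected immersed subgroup of dimension \(\dim G\), so it contains the image under \(\exp\) of a neighbourhood of \(0\) in \(\mathfrak{g}=\mathfrak{h}\). That image is an open neighbourhood \(U\) of the identity in \(G\). A subgroup containing an open set is open, and an open subgroup is also closed, because its complement is a union of cosets. Since \(G\) is connected, \(H=G\); equivalently, \(G\) is generated by \(U\).

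The main obstacle is the step \(S\subseteq\mathfrak{h}\), where one must make sure that membership of the curve \(\exp(tX)\) in the set \(H\) implies \(X\in\mathfrak{h}\) for the immersed structure, and is not merely a set-theoretic statement. I would handle this by citing the characterisation of the Lie algebra in the proof of Yamabe's theorem (Goto's version): there \(\mathfrak{h}\) is defined precisely as \(\{X: \exp(\mathbb{R}X)\subseteq H\}\). With this definition the point is immediate, and that \(\mathfrak{h}\) is a subalgebra becomes part of the cited result.

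There is also a more elementary alternative that avoids Yamabe. Let \(\mathfrak{h}'\) be the span of \(\mathrm{Ad}_h X\) for \(h\in H\) and \(X\in S\). One shows that \(\mathfrak{h}'\) is closed under brackets by differentiating \(\mathrm{Ad}_{\exp(tY)}X\), so it contains the Lie algebra generated by \(S\) and hence equals \(\mathfrak{g}\). Choosing a basis \(Z_i=\mathrm{Ad}_{h_i}X_i\) of \(\mathfrak{g}\), the map \((s_1,\dots,s_n)\mapsto \prod_i h_i\exp(s_iX_i)h_i^{-1}\) has invertible differential at \(0\) and lands in \(H\). By the inverse function theorem its image contains a neighbourhood of the identity, and the open-subgroup argument above then applies. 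I would present this version as a backup, since it also matches the paper's later emphasis on Jacobian rank conditions.
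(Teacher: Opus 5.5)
Your proposal is correct and follows the paper's proof essentially step for step: path-connectedness of \(H\), then Yamabe's theorem to make \(H\) an analytic subgroup with Lie algebra \(\mathfrak{h}\), then \(S\subseteq\mathfrak{h}\) forcing \(\mathfrak{h}=\mathfrak{g}\), and finally \(H=G\) by connectedness of \(G\). Two steps the paper states tersely are made rigorous in your version: you justify \(S\subseteq\mathfrak{h}\) via the characterisation \(\mathfrak{h}=\{X:\exp(\mathbb{R}X)\subseteq H\}\), whereas the paper differentiates \(\exp(tX)\) as if it were already known to be a smooth curve in \(H\), and you give an explicit open-subgroup argument for the final step.
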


\begin{proof}
  We first observe that \(H\) is an arcwise connected subgroup of \(G\). By definition, \(H\) is generated by elements of the form \(\exp(tX)\) for \(t \in \mathbb{R}\) and \(X \in S\). For any such generator, the map \(\gamma: \mathbb{R} \to G\) given by \(\gamma(t) = \exp(tX)\) defines a continuous path lying entirely within \(H\) that connects the identity element \(e = \gamma(0)\) to \(\exp(X) = \gamma(1)\). Since arbitrary elements of \(H\) are finite products of these generators and group multiplication is continuous, any element \(h \in H\) can be connected to the identity \(e\) by a continuous path in \(H\). Thus, \(H\) is arcwise connected.
  
  By Yamabe's theorem \cite{Yamabe:1950}, any arcwise connected subgroup of a Lie group is an analytic subgroup. Therefore, \(H\) is a Lie subgroup of \(G\) in its own right, possessing a well-defined Lie algebra \(\mathfrak{h} \subseteq \mathfrak{g}\). 
  
  Next, we determine the Lie algebra \(\mathfrak{h}\). For every \(X \in S\), the one-parameter subgroup \(t \mapsto \exp(tX)\) lies entirely within \(H\). The tangent vector to this curve at the identity must belong to the Lie algebra of \(H\):
  \begin{equation}
    \left. \frac{d}{dt} \right|_{t=0} \exp(tX) = X \in \mathfrak{h}\,.
  \end{equation}
  This implies that \(S \subseteq \mathfrak{h}\). By hypothesis, \(S\) generates \(\mathfrak{g}\) as a Lie algebra. Since \(\mathfrak{h}\) is a Lie subalgebra of \(\mathfrak{g}\) containing \(S\), it must contain the entire Lie algebra generated by \(S\). Consequently, we obtain \(\mathfrak{h} = \mathfrak{g}\).
  
  Finally, since \(H\) is a connected Lie subgroup of \(G\) and their Lie algebras coincide, \(H\) must be the connected component of the identity of \(G\). Because \(G\) is assumed to be connected, its identity component is the entire group. Therefore, \(H = G\).
\end{proof}

This theorem reduces our problem to showing that a given set of generators is sufficient to generate the desired Lie algebra. For those less familiar with formal group theory, we note that Lie algebra is analogous to quantum operator algebra, and elements of the corresponding Lie group are analogous to exponentials and products of exponentials of the quantum operators.

\begin{definition}[Multi-plane generator]\label{def:multi_plane}
  An individual proper rotation gate can be represented as a continuous Lie group element, 
  \begin{equation} 
    G_{ij}(\theta) = \exp\Big[i\frac{\theta}{2}\lsp(X_i Y_j - Y_i X_j)\Big]\,.
  \end{equation}
  The associated generator is 
  \begin{equation} 
    L_{ij} = X_i Y_j - Y_i X_j\,.
  \end{equation}
\end{definition}
The omission of \(Z\)-strings means the interacting \(G\)-gate applies identical pairwise-swap transitions across all possible configurations of spectator qubits. We call \(L_{ij}\) a multi-plane generator because it simultaneously rotates in every 2-dimensional plane connecting pairs of basis states that differ only in the occupations of qubits \(i\) and \(j\).

The \(A\)-gate \cite{Barkoutsos:2018igw, Gard:2020pwo},
\begin{equation}\label{eq:Agate}
  A(\theta) = 
  \begin{pmatrix}
    1 & 0 & 0 & 0 \\
    0 & \cos\theta & \sin\theta & 0 \\
    0 & \sin\theta & -\cos\theta & 0 \\
    0 & 0 & 0 & 1
  \end{pmatrix},
\end{equation}
applies the \(G\)-gate proper rotation combined with a local reflection matrix that selectively applies a \(-1\) phase to the \(|10\rangle\) state,
\begin{equation}
  A_{ij}(\theta) =  A_{ij}(0)G_{ij}(-\theta)\,,
\end{equation}
where
\begin{equation}
  A_{ij}(0)=I-2 \left(\frac{I - Z_i}{2}\right) \left(\frac{I + Z_j}{2}\right).
\end{equation}

\begin{definition}[Single-plane generator]\label{def:elementary}
  Given two basis states \(|x\rangle\) and \(|y\rangle\), the single-plane generator is
  \begin{equation} 
    E_{xy} = |x\rangle\langle y| - |y\rangle\langle x|\,.
  \end{equation}
\end{definition}
Unlike \(L_{ij}\), the generator \(E_{xy}\) rotates in exactly one plane: the single plane spanned by \(|x\rangle\) and \(|y\rangle\). When \(|x\rangle\) and \(|y\rangle\) differ on exactly two qubits, \(i\) and \(j\), the elementary generator can be expressed explicitly as
\begin{equation}
  E_{xy} = \frac{i}{2}\lsp L_{ij} \prod_{k \neq i,j} \frac{I + (-1)^{x_k} Z_k}{2}\,,
\end{equation}
where \(x_i=1, x_j=0\), and \(x_k\) is the value of qubit \(k\) in \(|x\rangle\). The product of projectors pins every spectator qubit to its value in \(|x\rangle\) (and \(|y\rangle\), since they agree on spectators).

\begin{proposition}\label{prop:universal_so_w}
Hardware-efficient 2-qubit rotation gates that locally exchange a single excitation (such as the \(G\)-gate or the corresponding discrete reflection \(A\)-gate), without their non-local Jordan--Wigner strings, generate the full \(\mathfrak{so}(w)\) Lie algebra. Consequently, circuits composed of these gates can universally prepare any real state within the constant particle number subspace.
\end{proposition}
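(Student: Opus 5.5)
The plan is to show that the real Lie algebra generated by the restrictions of \(\tfrac{i}{2}L_{ij}\) to the weight-\(n_{\text{q}}\) subspace contains every single-plane generator \(E_{xy}\) of Definition~\ref{def:elementary} for \(x,y\) differing by one swap. We then show that these generate \(\mathfrak{so}(w)\), and invoke Theorem~\ref{thm:settogroup} with \(G=SO(w)\), which is connected, to conclude that every element of \(SO(w)\) is a finite product of gates. Since \(SO(w)\) acts transitively on \(S^{w-1}\) for \(w\ge 2\), every real state is then reachable from \(\mathbf{e}_0\). We first note that \(\tfrac{i}{2}L_{ij}\) is a real antisymmetric matrix in the computational basis: it maps \(|10\rangle\mapsto|01\rangle\) and \(|01\rangle\mapsto-|10\rangle\) on qubits \(i,j\), and it preserves Hamming weight. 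So the generated algebra sits inside \(\mathfrak{so}(w)\). For the \(A\)-gate we handle the fixed reflection \(A_{ij}(0)\) separately. Products \(A_{ij}(\theta)A_{ij}(0)^{-1}\) are impossible since \(A(0)\) is not in the connected family, but \(A_{ij}(\theta)A_{ij}(0)=A_{ij}(0)G_{ij}(-\theta)A_{ij}(0)\) is a conjugate of \(G_{ij}\). Also \(A_{ij}(\theta')^{-1}A_{ij}(\theta)=G_{ij}(\theta'-\theta)\), so \(G\)-gates are recovered from pairs of \(A\)-gates and the \(A\) case reduces to the \(G\) case.

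The core step is the Pauli \(Z\) dressing computation. For overlapping pairs sharing one qubit, we compute \([L_{ij},L_{jk}]\) using the Pauli algebra. Products like \(X_iY_jY_jX_k\) collapse, while the cross terms with \(X_jY_j=iZ_j\) survive, and we expect a result proportional to \(Z_j(X_iX_k+Y_iY_k)\). A further commutator with \(L_{ik}\) should give an operator of the form \(Z_jL_{ik}\) up to constants. Taking half-sums and half-differences of \(L_{ik}\) and \(Z_jL_{ik}\) gives \(L_{ik}\tfrac{I\pm Z_j}{2}\), i.e.\ the multi-plane generator with spectator \(j\) pinned. Repeating the construction with further spectators \(k'\), e.g.\ commuting \(Z_jL_{ik}\) with \(L_{kk'}\) and then \(L_{ik}\)-type terms, multiplies in more \(Z\)'s. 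Iterating and taking linear combinations then yields \(L_{ij}\prod_{k\ne i,j}\tfrac{I+(-1)^{x_k}Z_k}{2}\), which by the explicit formula above is \(E_{xy}\). The restriction to the weight-\(n_{\text{q}}\) subspace matters here, and we should check that no unwanted terms survive there.

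Next comes connectivity. The single-plane generators satisfy \([E_{xy},E_{yz}]=E_{xz}\) for distinct \(x,y,z\) (up to sign). The graph on weight-\(n_{\text{q}}\) strings with edges given by single transpositions of a \(1\) and a \(0\) is connected, since the Johnson graph is connected. Hence all \(E_{xz}\) are obtained, and these span \(\mathfrak{so}(w)\).

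I expect the main obstacle to be the dressing step done in full generality. One must verify that the commutator chain really produces \(Z\) on an arbitrary spectator \(k\), including spectators not adjacent to \(i,j\), using only the available gate pairs. One must also check that degenerate cases give nonzero results: \(N_{\text{q}}=2\), and \(n_{\text{q}}\in\{0,N_{\text{q}}\}\) where \(w=1\) and the statement is trivial. I would first do the three-qubit computation explicitly. I would then argue inductively on the number of pinned spectators, assuming all-to-all availability of \(L_{ij}\). Finally, I would remark that nearest-neighbour \(L_{ij}\) generate the others via \([L_{ij},L_{jk}]\)-type combinations together with the projectors already obtained.
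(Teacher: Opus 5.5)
Your overall architecture matches the paper's. You dress \(L_{ij}\) with spectator \(Z\)'s through commutators of overlapping generators, induct on the spectator set, and take projector combinations to isolate \(E_{xy}\). You then close with connectivity of the Johnson graph and \([E_{xy},E_{yz}]=E_{xz}\). Your \(A\)-gate reduction, using that each \(A_{ij}(\theta)\) is an involution so that \(A_{ij}(\theta')A_{ij}(\theta)=G_{ij}(\theta'-\theta)\), is more careful than the paper's one-line remark, since Theorem~\ref{thm:settogroup} is stated for one-parameter subgroups.

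However, the core dressing computation is wrong as you describe it, and the route you sketch would fail. The overlapping commutator is \([L_{ij},L_{jk}]=-2i\,L_{ik}Z_j\). The surviving terms are \(X_i(Y_jX_j)Y_k\) and \(Y_i(X_jY_j)X_k\), which carry \(X_iY_k\) and \(Y_iX_k\). The \(X_iX_k\) and \(Y_iY_k\) pieces come from \(X_j^2=Y_j^2=I\) and cancel. Your own observation that everything stays in \(\mathfrak{so}(w)\) already excludes your guess \(Z_j(X_iX_k+Y_iY_k)\), because that operator is real symmetric. Moreover, if it were the output, your ``further commutator with \(L_{ik}\)'' would give \(Z_j[L_{ik},X_iX_k+Y_iY_k]=4i\,Z_j(Z_i-Z_k)\), which is diagonal rather than \(Z_jL_{ik}\). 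The fix is that the single commutator already yields the dressed generator. The induction should then be set up as in the paper, with the hypothesis quantified over all pairs and all disjoint spectator sets of size at most \(s\). To obtain \(L_{ij}\prod_{k\in S}Z_k\), pick \(k'\in S\), take \(L_{k'j}\prod_{l\in S\setminus\{k'\}}Z_l\) from the hypothesis, and commute it with the bare \(L_{ik'}\).

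Your closing remark, that nearest-neighbour \(L_{i,i+1}\) generate the other \(L_{ij}\) ``together with the projectors already obtained'', is false; the paper proves the opposite in Corollary~\ref{cor:adjacent_insufficient}. Iterated commutators of adjacent generators only produce contiguously dressed operators \(L_{ij}Z_{i+1}\cdots Z_{j-1}\), which are Jordan--Wigner strings. These close into the free-fermion algebra \(\mathfrak{so}(N_{\mathrm{q}})\). You never obtain a bare non-adjacent \(L_{ij}\), nor a non-contiguous dressing such as \(L_{ij}Z_k\) with \(k\notin\{i+1,\ldots,j-1\}\). Appealing to the projectors is circular, because they were built from bare generators on arbitrary pairs. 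Drop this remark and state explicitly the hypothesis your induction actually uses: the bare \(L_{ij}\) must be natively available for all pairs \((i,j)\).
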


\begin{proof}[Proof]
To prove that we can generate any transition, we must show that we can isolate a single generator \(E_{xy}\) from the multi-plane \(L_{ij}\). This is achieved through the commutators of the proper rotation generators. Throughout, operators such as \(L_{ij}\) and \(Z_k\) are understood to be acting on the full multi-qubit Hilbert space, with trivial (identity) action on all unlabelled qubits. Commuting two overlapping generators yields
\begin{equation}\label{eq:L2L2-Z-dress}
  [L_{ik}, L_{kj}] = -2 i\lsp L_{ij}\lsp Z_k\,,
\end{equation}
by virtue of \([X, Y] = 2i Z\). The crucial feature here is the emergence of the Pauli \(Z\) operator on the shared qubit. Since \(L_{ij}\) is a native generator and \(L_{ij}\lsp Z_k\) arises from the commutator, their linear combinations \(L_{ij}(I \pm Z_k)/2\) project spectator qubit \(k\) onto a definite value. 

Let \(S\) be a set of spectators disjoint from \(\{i,j\}\). To project on all spectator qubits simultaneously and thereby isolate \(E_{xy}\), we show by induction on \(|S|\) that \(L_{ij}\prod_{k\in S}Z_k\) lies in the generated algebra for every pair \((i,j)\) and every \(S\). The base cases \(|S|=0\) (native generators) and \(|S|=1\) (equation~\eqref{eq:L2L2-Z-dress}) are established above. For the inductive step, suppose the claim holds for all sets of size less than or equal to \(s\) and let \(|S|=s+1\). Pick any \(k'\in S\) and set \(S'=S\setminus\{k'\}\), so \(|S'|=s\). The inductive hypothesis applied to the pair \((k',j)\) yields \(L_{k'j}\prod_{l\in S'}Z_l\). Since \(S'\cap\{i,k'\}=\emptyset\), every \(Z_l\) in the product commutes with the native generator \(L_{ik'}\), giving
\begin{equation}\label{eq:inductive_dressing}
  \Big[L_{ik'},\; L_{k'j}\prod_{l\in S'} Z_l\Big]
  = [L_{ik'},\, L_{k'j}]\prod_{l\in S'} Z_l
  = -2i\lsp L_{ij}\lsp Z_{k'}\prod_{l\in S'} Z_l
  = -2i\lsp L_{ij}\prod_{l\in S} Z_l\,.
\end{equation}
Thus, \(L_{ij}\) can be dressed with an arbitrary product of spectator \(Z\) operators. Linear combinations of differently dressed generators then yield the full projector product \(\prod_{k\neq i,j}(I+(-1)^{x_k}Z_k)/2\) needed to isolate the single-plane generator \(E_{xy}\).

The dressing procedure above directly isolates \(E_{xy}\) whenever \(|x\rangle\) and \(|y\rangle\) differ on exactly two qubits, since such a pair lies in the image of a single multi-plane generator \(L_{ij}\). To obtain \(E_{xy}\) for pairs differing on more than two qubits, consider the graph whose vertices are the weight-\(n_{\text{q}}\) bitstrings, with edges connecting pairs that differ in exactly two positions. This graph is connected: given any two same-weight bitstrings, one can repeatedly swap a \(1\) in a position where the strings disagree with a \(0\) in another disagreeing position, strictly reducing their Hamming distance at each step. Given any path \(|x\rangle,|y\rangle,\ldots,|z\rangle\) in this graph, iterated application of the commutator identity
\begin{equation}
  [E_{xy}, E_{yz}] = E_{xz}
\end{equation}
constructs the generator for the composite transition. This produces all \(w(w-1)/2\) generators of \(\mathfrak{so}(w)\).

For the \(A\)-gate, the continuous part of the decomposition \(A_{ij}(\theta) = A_{ij}(0)\, G_{ij}(-\theta)\) shares the same generator \(L_{ij}\), so the Lie algebra generated by \(A\)-gate circuits is identical to that of the corresponding \(G\)-gate circuits.
\end{proof}

A related theorem regarding the universality of Hamming-weight preserving gates has recently appeared in~\cite{Yan:2024izg}. There, the authors analytically evaluate the Lie algebra generated by continuous one-parameter families of quantum gates and prove that for purely continuous hardware-efficient operators to span the full special unitary Lie algebra \(\mathfrak{su}(w)\), their generator must inherently include two-body interaction diagonal terms that break free-fermion integrability. Our formalisation in Proposition~\ref{prop:universal_so_w} is fully complementary to their result: while pure \(G\)-gates (which represent pure single-particle hopping) are mathematically insufficient to approximate arbitrary complex unitary matrices, as accurately dictated by the conditions established in~\cite{Yan:2024izg}, we establish that they precisely span the special orthogonal algebra \(\mathfrak{so}(w)\). This orthogonal universality guarantees that such minimal hardware-efficient circuits can access any arbitrary \emph{real} quantum state within the symmetry-constrained subspace, which constitutes the essential requirement for the variational preparation of ground states of real Hamiltonians.

\subsection{Conditions for spanning circuits}
For practical scaling on near-term devices, it is desirable to have a robust algebraic technique for confirming if a specialised ansatz spans the required manifold with the minimal number of parameters. The most efficient preparation circuit mandates exactly \(w-1\) independent parameters to traverse the manifold. An ansatz that fails to span the target manifold may be unable to reach the true ground state even in principle, so circuit completeness is not merely a convenience but a fundamental computational necessity. Approximating the ground state energy of a generic local Hamiltonian is QMA-complete~\cite{Kempe:2006lhp}, a complexity class believed to be strictly harder than BQP.

\begin{lemma}[Local Spanning Criterion]\label{lemma:jacobian_span}
  Let \(\ket{\psi_0}\) be a fixed initial state in the Hamming-weight-\(n_{\mathrm{q}}\) subspace of dimension \(w\), and let \(\ket{\psi(\boldsymbol{\theta})} = U(\boldsymbol{\theta})\ket{\psi_0}\) be a parametrised circuit with \(p \geq w-1\) generators \(H_k\). If the \(w \times p\) Jacobian matrix with entries \(J_{xk}(\boldsymbol{\theta}) = \partial_{\theta_k}\braket{x|\psi(\boldsymbol{\theta})}\), where \(\{\ket{x}\}\) are the \(w\) computational basis states of the Hamming-weight-\(n_{\mathrm{q}}\) subspace, has rank \(w-1\) at some reference point \(\boldsymbol{\theta}_{\mathrm{ref}}\), then \(J\) has rank \(w-1\) at almost every \(\boldsymbol{\theta} \in T^{p}\), where \(T^{p}\) is the \(p\)-dimensional torus.
\end{lemma}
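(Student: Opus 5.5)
The argument rests on real-analyticity of the circuit map together with the fact that a nonzero real-analytic function on a connected manifold has a measure-zero zero set.

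The circuit has the form \(U(\boldsymbol{\theta}) = \prod_k \exp(-i\theta_k H_k/2)\) (or any fixed product of such exponentials interleaved with constant gates such as \(A_{ij}(0)\)). Each amplitude \(\braket{x|\psi(\boldsymbol{\theta})}\) is therefore a trigonometric polynomial in \(\boldsymbol{\theta}\). Consequently each entry \(J_{xk}(\boldsymbol{\theta})\) is a real-analytic function on the torus \(T^p\), where \(\theta_k\) is periodic with period \(2\pi\) or \(4\pi\) depending on the normalisation; we take the torus of the appropriate period. If the amplitudes are complex, we split them into real and imaginary parts, giving a real \(2w\times p\) matrix; the rank statement is handled identically, and in the \(\mathfrak{so}(w)\) setting all entries are already real.

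First, we record the upper bound: \(\operatorname{rank} J(\boldsymbol{\theta}) \le w-1\) everywhere. Normalisation gives \(\sum_x |\braket{x|\psi}|^2 = 1\). Differentiating, \(\sum_x \operatorname{Re}\big(\overline{\psi_x}\, J_{xk}\big) = 0\) for every \(k\), so in the real case the column space of \(J\) lies in the hyperplane \(\psi(\boldsymbol{\theta})^\perp\) of dimension \(w-1\). Hence rank \(w-1\) is maximal, and "rank \(w-1\)" is equivalent to "some \((w-1)\times(w-1)\) minor is nonzero."

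Next, we define the function that detects this rank. Let \(F(\boldsymbol{\theta}) = \sum_{R,C} \big(\det J_{R,C}(\boldsymbol{\theta})\big)^2\), the sum over all row sets \(R\) of size \(w-1\) and column sets \(C\) of size \(w-1\). Equivalently, \(F\) is the elementary symmetric function \(e_{w-1}\) of the eigenvalues of \(J^{\mathsf T}J\). This \(F\) is real-analytic, in fact a trigonometric polynomial, and \(F(\boldsymbol{\theta})\neq 0\) exactly when \(\operatorname{rank} J(\boldsymbol{\theta}) = w-1\). By hypothesis \(F(\boldsymbol{\theta}_{\mathrm{ref}})>0\), so \(F\) is not identically zero.

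The main step is to show that the zero set of a real-analytic function on the connected manifold \(T^p\) that is not identically zero has Lebesgue (Haar) measure zero. I would prove this directly by induction on \(p\). For \(p=1\), a nonzero analytic function on a circle has isolated, hence finitely many, zeros. For the inductive step, suppose \(F\not\equiv 0\). By the identity theorem, applied via connectedness, the set of \(\theta_p\) for which the slice \(\boldsymbol{\theta}'\mapsto F(\boldsymbol{\theta}',\theta_p)\) vanishes identically consists of zeros of a nonzero analytic coefficient function. Concretely, expanding \(F\) as a trigonometric polynomial in \(\boldsymbol{\theta}'\) gives coefficients that are trigonometric polynomials in \(\theta_p\), at least one of them nonzero. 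So this set is finite. Every other slice is a nonzero analytic function of \(p-1\) variables, whose zero set has measure zero by induction. Fubini then gives measure zero in \(T^p\).

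The trigonometric-polynomial structure makes the induction especially clean, and it is where I would be careful. The step I expect to need most care is justifying that the entries are genuinely analytic on the torus. This includes handling the fixed discrete factors such as \(A_{ij}(0)\), which are constants and so harmless, and choosing the period so that the parameter space really is \(T^p\). With that settled, the complement of the zero set of \(F\) has full measure, which is the claim.
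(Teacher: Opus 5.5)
Your proposal is correct and follows the same route as the paper. Both arguments use analyticity of the Jacobian entries, the normalisation bound \(\operatorname{rank}J\le w-1\), and the fact that a \((w-1)\)-minor (or, in your version, the sum of their squares) that is nonzero at \(\boldsymbol{\theta}_{\mathrm{ref}}\) vanishes only on a null set; you prove that last fact directly by induction on \(p\) and Fubini using the trigonometric-polynomial structure, whereas the paper simply cites it. One minor caveat: your aside that complex amplitudes are ``handled identically'' after realification is not right, because normalisation then only bounds the real \(2w\times p\) matrix by rank \(2w-1\), so the maximality step genuinely needs the real \(S^{w-1}\) setting in which both you and the paper work.
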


\begin{proof}[Proof]
  Since the circuit is a composition of exponentials of Hermitian operators, the map \(\boldsymbol{\theta} \mapsto \ket{\psi(\boldsymbol{\theta})}\) is real-analytic, and hence so is each entry of the \(w \times p\) Jacobian matrix \(J(\boldsymbol{\theta})\). Differentiating \(\braket{\psi|\psi} = 1\) shows that the columns of \(J\) lie in the tangent space of \(S^{w-1}\), so \(\operatorname{rank}(J) \leq w-1\). The Jacobian achieves rank \(w-1\) if and only if at least one \((w-1)\times(w-1)\) minor of \(J\) is non-zero. Each such minor is a real-analytic function of \(\boldsymbol{\theta}\). A real-analytic function that is non-zero at a single point can vanish only on a set of Lebesgue measure zero, so a rank-\((w-1)\) Jacobian at \(\boldsymbol{\theta}_{\mathrm{ref}}\) implies rank \(w-1\) at almost every \(\boldsymbol{\theta} \in T^{p}\).

  When each generator \(H_k\) appears at most once, the natural choice is \(\boldsymbol{\theta}_{\mathrm{ref}} = \mathbf{0}\), where all intermediate unitaries collapse to the identity and the entries of \(J\) reduce to \(J_{xk}(\mathbf{0}) = -i\braket{x|H_k|\psi_0}\). When a generator appears at multiple depths, evaluation at \(\boldsymbol{\theta} = \mathbf{0}\) collapses all instances to the same tangent vector and is inconclusive; one should instead evaluate at a generic \(\boldsymbol{\theta}_{\mathrm{ref}} \neq \mathbf{0}\), at which the non-commuting intermediate unitaries rotate each instance to a distinct tangent vector, and the same real-analytic argument applies.
\end{proof}

The criterion of Lemma~\ref{lemma:jacobian_span} provides a powerful, computationally inexpensive classical method to test gate configurations. In the minimal case \(p=w-1\), a full-rank Jacobian at a reference point certifies that the image of the circuit map
\begin{equation}
  \begin{aligned}
    U\colon T^{w-1}&\to S^{w-1}\\[-5pt]
    \boldsymbol{\theta}&\mapsto U(\boldsymbol{\theta})|\psi_0\rangle
  \end{aligned}
\end{equation}
explores a full \((w-1)\)-dimensional patch of the target manifold at generic parameter values. This local condition can be used to show, for example, that a circuit composed exclusively of continuous proper \(G\)-gates restricted only to adjacent qubits cannot generate the intermediate \(Z\) operators necessary to span the subspace of constant Hamming weight. Thus, to achieve universality across the full subspace simply with proper parametrised \(G\)-gates, the circuit must include gates acting selectively on non-adjacent qubits.

\begin{corollary}\label{cor:adjacent_insufficient}
  For \(2 \leq n_{\mathrm{q}} \leq N_{\mathrm{q}} - 2\), no circuit composed exclusively of continuous proper \(G\)-gates on adjacent qubits has a Jacobian of rank \(w-1\) for the constant-Hamming-weight subspace of dimension \(w = \binom{N_{\mathrm{q}}}{n_{\mathrm{q}}}\).
\end{corollary}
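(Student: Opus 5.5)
The plan is to show that adjacent $G$-gates generate only a \emph{free-fermion} Lie algebra isomorphic to $\mathfrak{so}(N_{\mathrm{q}})$, and that this algebra is too small to give a rank-$(w-1)$ Jacobian. The starting observation is that for adjacent qubits the Jordan--Wigner string between $i$ and $i+1$ is empty. Hence $L_{i,i+1}$ coincides, up to a constant, with the fermionic real hopping operator $c_i^\dagger c_{i+1}-c_{i+1}^\dagger c_i$. Commutators of real hoppings close on $\{c_a^\dagger c_b - c_b^\dagger c_a\}$, so the Lie algebra $\mathfrak{g}_{\mathrm{adj}}$ generated by all adjacent $L_{i,i+1}$ is the image of $\mathfrak{so}(N_{\mathrm{q}})$ acting on the weight-$n_{\mathrm{q}}$ subspace. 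This is the representation $\Lambda^{n_{\mathrm{q}}}\mathbb{R}^{N_{\mathrm{q}}}$ induced from the defining representation on single-particle modes. In particular $\dim\mathfrak{g}_{\mathrm{adj}}\le N_{\mathrm{q}}(N_{\mathrm{q}}-1)/2$. I expect this to follow from a short computation analogous to \eqref{eq:L2L2-Z-dress}: with the string included, the commutator of two overlapping hoppings produces no extra $Z$ dressing. It is worth checking this carefully, since it is exactly the failure of the Pauli $Z$ dressing mechanism of Proposition~\ref{prop:universal_so_w}.

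Next I bound the Jacobian rank. For a circuit $U(\boldsymbol{\theta})=\prod_k e^{-i\theta_k H_k}$ with each $H_k\in i\mathfrak{g}_{\mathrm{adj}}$, each column $\partial_{\theta_k}\ket{\psi(\boldsymbol{\theta})}$ equals $U_{>k}(-iH_k)U_{\le k}\ket{\psi_0}$. Since $\mathfrak{g}_{\mathrm{adj}}$ is invariant under conjugation by the group it generates, this is $X_k\ket{\psi(\boldsymbol{\theta})}$ for some $X_k\in\mathfrak{g}_{\mathrm{adj}}$. Hence, with $\psi=\psi(\boldsymbol{\theta})$,
\begin{equation}
  \operatorname{rank}J(\boldsymbol{\theta})\le \dim\{X\psi : X\in\mathfrak{g}_{\mathrm{adj}}\} = \dim\mathfrak{g}_{\mathrm{adj}}-\dim\operatorname{stab}(\psi),
\end{equation}
where $\operatorname{stab}(\psi)$ denotes the stabiliser subalgebra $\{X\in\mathfrak{g}_{\mathrm{adj}} : X\psi=0\}$. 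It therefore suffices to show that this quantity is less than $w-1$ for every state $\psi$ in the subspace.

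I then split into cases. For $3\le n_{\mathrm{q}}\le N_{\mathrm{q}}-3$ (so $N_{\mathrm{q}}\ge 6$), a direct count suffices: $w-1\ge\binom{N_{\mathrm{q}}}{3}-1>N_{\mathrm{q}}(N_{\mathrm{q}}-1)/2$. The harder case is $n_{\mathrm{q}}=2$, with $n_{\mathrm{q}}=N_{\mathrm{q}}-2$ handled by particle--hole duality, i.e.\ the $\mathfrak{so}$-equivariant Hodge star. Here $w-1=N_{\mathrm{q}}(N_{\mathrm{q}}-1)/2-1$, so I need $\dim\operatorname{stab}(\psi)\ge 2$. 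I use the identification $\Lambda^2\mathbb{R}^{N}\cong\mathfrak{so}(N)$ as the adjoint representation, so that $X\psi=[X,A_\psi]$ for an antisymmetric matrix $A_\psi$. The stabiliser is then the centraliser of $A_\psi$. Every element of $\mathfrak{so}(N)$ lies in a Cartan subalgebra, of dimension $\lfloor N/2\rfloor\ge 2$ for $N\ge4$, so the centraliser has dimension at least $2$. Hence the rank is at most $w-2$ for every $\boldsymbol{\theta}$ and every initial state.

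The main obstacle is the first step: making the identification of adjacent $L_{i,i+1}$ with free-fermion hoppings, and the closure of their commutator algebra, fully rigorous, including signs from the Jordan--Wigner ordering. The $n_{\mathrm{q}}=2$ boundary case is the only place where the count is tight, so the centraliser argument must be stated carefully.
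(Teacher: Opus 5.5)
Your proof is correct, but it takes a different route from the paper's at the counting step. The first step is the same in both: adjacent $L_{i,i+1}$ carry no Jordan--Wigner string, so they generate the free-fermion $\mathfrak{so}(N_{\mathrm{q}})$ acting on $\Lambda^{n_{\mathrm{q}}}\mathbb{R}^{N_{\mathrm{q}}}$.

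\textbf{The paper's route.} It bounds the Jacobian rank by the dimension $n_{\mathrm{q}}(N_{\mathrm{q}}-n_{\mathrm{q}})$ of the $SO(N_{\mathrm{q}})$-orbit of a computational basis state, namely the real Grassmannian of Slater determinants. It then finishes with a single inequality, $n_{\mathrm{q}}(N_{\mathrm{q}}-n_{\mathrm{q}})\le N_{\mathrm{q}}^2/4<\binom{N_{\mathrm{q}}}{2}-1\le w-1$, which holds uniformly for $2\le n_{\mathrm{q}}\le N_{\mathrm{q}}-2$. No case split is needed.

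\textbf{Your route.} You bound the rank by $\dim\mathfrak{so}(N_{\mathrm{q}})-\dim\operatorname{stab}(\psi)$ at the current state. This forces a case split, and in the tight cases $n_{\mathrm{q}}\in\{2,N_{\mathrm{q}}-2\}$ it needs the Cartan-centraliser argument via $\Lambda^2\mathbb{R}^{N_{\mathrm{q}}}\cong\mathfrak{so}(N_{\mathrm{q}})$ and Hodge duality.

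\textbf{What your route buys.} It is more general. The paper's Grassmannian bound silently uses that $\ket{\psi_0}$ is a Slater determinant. Lemma~\ref{lemma:jacobian_span} and the corollary, however, allow any initial state in the subspace. For non-decomposable states the orbit is genuinely larger than the Grassmannian: for $n_{\mathrm{q}}=2$, $N_{\mathrm{q}}=5$ a generic orbit has dimension $10-2=8>6$, against $w-1=9$. Your centraliser step is exactly what handles such states. You also make explicit, through $\mathrm{Ad}$-invariance of $\mathfrak{g}_{\mathrm{adj}}$, why every Jacobian column lies in $\mathfrak{g}_{\mathrm{adj}}\ket{\psi(\boldsymbol{\theta})}$; the paper only asserts that the tangent space is bounded by the orbit dimension.

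\textbf{One caveat.} Read ``every initial state'' as every real state, which is the $S^{w-1}$ setting of the lemma. For complex $\psi$, the real stabiliser is the common centraliser of the real and imaginary parts of $A_\psi$, and this can be trivial. Passing to the centraliser in $\mathfrak{so}(N_{\mathrm{q}},\mathbb{C})$, whose dimension is still at least the rank $\lfloor N_{\mathrm{q}}/2\rfloor$, recovers the bound for the complex rank of $J$.
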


\begin{proof}[Proof]
  Because they act only on neighbouring modes, the adjacent generators \(L_{i,i+1}=X_iY_{i+1}-Y_iX_{i+1}\) in the Jordan--Wigner representation are completely free of intermediate \(Z\)-strings and thus correspond precisely to quadratic fermionic hopping operators. Because the commutator of any two number-conserving quadratic operators yields another number-conserving quadratic operator, the Lie algebra generated by adjacent-only gates is contained in the Lie algebra of number-conserving free-fermion transformations.

  By Theorem \ref{thm:settogroup}, the achievable group of unitaries is therefore restricted to the connected Lie subgroup generated by this free-fermion algebra, which is the group of continuous single-particle rotations \(SO(N_{\mathrm{q}})\). A computational basis state of constant Hamming weight represents a single Slater determinant, which corresponds to choosing exactly \(n_{\mathrm{q}}\) occupied modes out of the total \(N_{\mathrm{q}}\) available single-particle modes. Evolution under the free-fermion group \(SO(N_{\mathrm{q}})\) acts merely as a basis transformation on these underlying modes. Consequently, the \(n_{\mathrm{q}}\)-dimensional subspace of occupied states is simply rotated as a rigid block within the overall \(N_{\mathrm{q}}\)-dimensional single-particle space. By definition, the geometric space of all such \(n_{\mathrm{q}}\)-dimensional subspaces in \(\mathbb{R}^{N_{\mathrm{q}}}\) is the real Grassmannian manifold \(\mathrm{Gr}_{\mathbb{R}}(n_{\mathrm{q}},N_{\mathrm{q}})\). Consequently, the orbit of any computational basis state---and thus the total reachable set of states under adjacent \(G\)-gate circuits---is restricted to this Grassmannian, whose manifold dimension is exactly \(n_{\mathrm{q}}(N_{\mathrm{q}}-n_{\mathrm{q}})\). Because the generated tangent space is everywhere bounded by this manifold dimension, it cannot span the full tangent space of the unit sphere, provided \(n_{\mathrm{q}}(N_{\mathrm{q}}-n_{\mathrm{q}}) < w - 1\).

  It remains to verify this inequality for \(2 \leq n_{\mathrm{q}} \leq N_{\mathrm{q}} - 2\). By AM-GM, \(n_{\mathrm{q}}(N_{\mathrm{q}}-n_{\mathrm{q}}) \leq \frac14 N_{\mathrm{q}}^2\). The binomial coefficients obey \(\binom{N_{\mathrm{q}}}{n_{\mathrm{q}}} \geq \binom{N_{\mathrm{q}}}{2} = \frac12 N_{\mathrm{q}}(N_{\mathrm{q}}-1)\) for all \(2 \leq n_{\mathrm{q}} \leq N_{\mathrm{q}}-2\). The bound \(\frac14N_{\mathrm{q}}^2 < \frac12N_{\mathrm{q}}(N_{\mathrm{q}}-1) - 1\) holds for all \(N_{\mathrm{q}} \geq 4\) (and the condition \(2 \leq n_{\mathrm{q}} \leq N_{\mathrm{q}}-2\) forces \(N_{\mathrm{q}} \geq 4\)). Therefore, \(n_{\mathrm{q}}(N_{\mathrm{q}}-n_{\mathrm{q}}) \leq \frac14 N_{\mathrm{q}}^2 < w - 1\).
\end{proof}

The dimensional gap arises because generic states in \(S^{w-1}\) are not Slater determinants (e.g., the equal superposition \(\tfrac{1}{\sqrt{2}}(|1100\rangle+|0011\rangle)\) lies strictly outside \(\mathrm{Gr}_{\mathbb{R}}(2,4)\)). A direct Pauli computation provides a complementary structural perspective on the restriction of Corollary~\ref{cor:adjacent_insufficient}. The commutator of two adjacent generators produces a non-adjacent generator dressed by a \(Z\)-string on the shared qubit. Every iterated commutator of adjacent generators has the form \(L_{ij}Z_{i+1}\cdots Z_{j-1}\), and since commutators of two such operators again have this form, they close into a Lie subalgebra; this is precisely the free-fermion algebra of Corollary~\ref{cor:adjacent_insufficient}. This subalgebra cannot produce the non-contiguous dressed operators \(L_{ij}Z_k\) (with \(k\notin\{i{+}1,\ldots,j{-}1\}\)) required by Proposition~\ref{prop:universal_so_w} to span the full \(\mathfrak{so}(w)\).

It is worth noting that the specific \(A\)-gate circuits proposed by Gard et al.~\cite{Gard:2020pwo}, which use repeating blocks of strictly adjacent \(A\)-gates, appear to bypass this limitation but only if the input state is of the N\'eel type (i.e.\ \(|0101\ldots\rangle\)). Discrete reflections \(A_{ij}(0)\) in adjacent \(A\)-gates, when applied to the alternating occupation pattern of the N\'eel state, appear to break the free-fermion integrability that would otherwise constrain adjacent-only continuous rotations. While adjacent \(A\)-gates are not generically universal for arbitrary state-to-state mappings across the full algebra, they might be sufficient to construct complete state-preparation ansatze when anchored to the correct starting point, as suggested in \cite{Gard:2020pwo}. A rigorous proof of this statement, however, remains elusive.

While Lemma~\ref{lemma:jacobian_span} ensures that a minimal circuit can explore a local neighbourhood of states, it does not guarantee that every possible state in the subspace can be reached. Any smooth mapping from the circuit's periodic parameters to the target sphere must encounter ``critical points'' where the circuit loses its ability to move in certain directions. This is due to a fundamental topological difference: the parameter space (a torus) has ``holes'' or cycles that the sphere lacks, making it impossible for any smooth map between them to have full-rank Jacobian everywhere. In practice, hardware-efficient circuits are even more restricted because each gate acts on several pairs of states simultaneously. This structure often partitions the reachable states into separate ``islands'' or patches, meaning that a single minimal circuit may be ``trapped'' within one region and unable to reach states beyond certain topological boundaries.

For variational algorithms such as VQE, which target low-lying excitations near a known reference state, local coverage of a single patch is typically sufficient. Achieving strict global reachability of arbitrary superpositions, however, generally requires over-parametrising the circuit, for instance by repeating the minimal gate sequence, to supply the extra degrees of freedom needed to cross these boundaries. The construction of~\cite{Gard:2020pwo} appears to achieve a surjective map from \(T^w\) to \(S^{w-1}\) precisely by introducing one additional parameter beyond the minimal count.

\section{Application to bosonic simulation with binary encoding}\label{sec:bempa}
The mathematical framework developed in Section~\ref{sec:maths} for dynamically generating the full orthogonal group on a constrained subspace has a variety of extensions. One of them is the simulation of bosonic systems with a conserved particle number. Several considerations related to the quantum simulation of bosonic and phononic problems have been previously studied~\cite{Somma:2003qra, *McArdle:2019dqs, *Ollitrault:2020heq, *Sawaya:2020qce, *Jnane:2021anc, *Schmitz:2024gop, *Trenev:2025rvm}.

Recently, the binary encoded multi-level particles ansatz (BEMPA)~\cite{Bahrami:2024dzf} was proposed to simulate such systems using standard binary encoding for the bosonic modes. In this encoding, the available states of a bosonic mode are represented by a qubit register where the \(k\)-th qubit corresponds to a weight of \(2^{k-1}\). To enforce global particle conservation, the total sum of these weights across all modes must be constant. 

The BEMPA circuit relies on two fundamental building blocks, whose generators we denote \(\hat{G}_{\lnsp A}\) and \(\hat{G}_{\lnsp B}\) (using hats to distinguish them from the \(G\)-gates of Section~\ref{sec:maths}). The \(\hat{A}\) gate acts on two qubits of the same significance \(2^{k-1}\) (belonging to different bosonic modes) and performs an exact rotation in the subspace spanned by \(|01\rangle\) and \(|10\rangle\). Its generator is \(\hat{G}_{\lnsp A} = i(|01\rangle\langle 10| - |10\rangle\langle 01|)\), which in Pauli form reads
\begin{equation}
  \hat{G}_{\lnsp A} = \tfrac12 (X \otimes Y - Y \otimes X)\,.
\end{equation}
The \(\hat{B}\) gate acts on three qubits, transferring amplitude between two qubits of significance \(2^{k-1}\) and one qubit of significance \(2^k\); its generator is \(\hat{G}_{\lnsp B} = i(|001\rangle\langle 110| - |110\rangle\langle 001|)\). The corresponding Pauli string representation is given by
\begin{equation}
  \hat{G}_{\lnsp B} = \tfrac14 (X \otimes X \otimes Y - Y \otimes Y \otimes Y - X \otimes Y \otimes X - Y \otimes X \otimes X)\,.
\end{equation}
Crucially, just as the \(G\) gate is applied directly to the qubit registers without intervening non-local Jordan--Wigner \(Z\)-strings, so too are the BEMPA \(\hat{A}\) and \(\hat{B}\) gates.

\begin{proposition}\label{prop:bempa}
  The BEMPA \(\hat{A}\) and \(\hat{B}\) gates, applied directly without non-local \(Z\)-strings, generate the full \(\mathfrak{so}(w_{\emph{boson}})\) Lie algebra on the conserved-particle-number subspace of dimension \(w_{\emph{boson}}\). Consequently, circuits composed of these gates can universally prepare any real state within this subspace.
\end{proposition}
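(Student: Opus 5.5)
The plan follows the structure of the proof of Proposition~\ref{prop:universal_so_w}. We first isolate single-plane generators \(E_{xy}\) from the multi-plane \(\hat{A}\) and \(\hat{B}\) generators by dressing them with spectator \(Z\) operators. We then use connectivity of the transition graph on the conserved-particle-number subspace, together with \([E_{xy},E_{yz}]=E_{xz}\), to generate all of \(\mathfrak{so}(w_{\text{boson}})\). Theorem~\ref{thm:settogroup} then upgrades this to the group \(SO(w_{\text{boson}})\), which acts transitively on \(S^{w_{\text{boson}}-1}\); this gives universal real state preparation. Note that \(i\hat{G}_{\lnsp A}\) and \(i\hat{G}_{\lnsp B}\) are real antisymmetric in the computational basis and preserve the total weight \(\sum 2^{k-1}\). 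The generated algebra therefore sits inside \(\mathfrak{so}(w_{\text{boson}})\), and only the spanning direction has to be shown.

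\textbf{Step 1: \(Z\) dressing.} Among qubits of a fixed significance level, the \(\hat{A}\) generators are exactly \(\tfrac12 L_{ij}\) restricted to that level. Equation~\eqref{eq:L2L2-Z-dress} and the induction~\eqref{eq:inductive_dressing} therefore give \(L_{ij}\prod_{k\in S}Z_k\) for any \(S\) drawn from the same level, provided the level contains at least three modes. If a level has only two modes, no same-level overlapping pair exists, and we must dress with \(Z\)'s from other levels. This is where \(\hat{B}\) enters. Take \(\hat{B}\) on qubits \((a,b,c)\), with \(a,b\) of significance \(2^{k-1}\) and \(c\) of significance \(2^k\), and commute it with an \(\hat{A}\) on \((a,a')\) at level \(k-1\). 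I expect this to produce \(\hat{B}\)-type generators on \((a',b,c)\) dressed by \(Z_a\). Commuting \(\hat{B}\) with \(\hat{A}\) on \((c,c')\) at level \(k\) similarly moves the high qubit and dresses it with \(Z_c\). Conversely, commuting two \(\hat{B}\)'s that share their low pair but have different high qubits \(c,c'\) should give \(\hat{A}_{cc'}\) dressed by projector combinations of \(Z_aZ_b\). Iterating these as in~\eqref{eq:inductive_dressing}, with the projector identities \(\tfrac{I\pm Z}{2}\) handled by linear combinations, we aim to show that every \(\hat{A}\)- and \(\hat{B}\)-type generator can be dressed by an arbitrary product of spectator \(Z\)'s. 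This yields \(E_{xy}\) for every pair \(x,y\) related by a single \(\hat{A}\) or \(\hat{B}\) move.

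\textbf{Step 2: connectivity.} Consider the graph on valid bitstrings (all modes' binary registers with the fixed total particle number), with edges given by single \(\hat{A}\) moves (one unit of weight \(2^{k-1}\) transferred between modes) and single \(\hat{B}\) moves (carry or borrow within the binary structure of the same modes). We show this graph is connected by reducing every configuration to a canonical one, for example all particles in mode \(1\), up to the register cap. Any transfer of one boson between modes \(m\) and \(m'\) is a binary increment on \(m'\) and decrement on \(m\). Both can be decomposed into \(\hat{A}\) moves at the lowest level together with \(\hat{B}\) carries, which convert pairs of \(2^{k-1}\) units into one \(2^k\) unit. Here we use that BEMPA applies \(\hat{B}\) to two low qubits from different modes and a high qubit in one of them. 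Combined with \([E_{xy},E_{yz}]=E_{xz}\), connectivity gives all \(E_{xy}\).

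\textbf{Main obstacle.} I expect the main difficulty to be Step 1 for the \(\hat{B}\) generators. Their Pauli expansion has four weight-3 terms, so the commutators are messier than~\eqref{eq:L2L2-Z-dress}, and the resulting operators contain both single-\(Z\) and \(Z\)-pair dressings. My first approach would be to bypass the Pauli algebra and compute the commutators directly in the transition-operator form \(|001\rangle\langle110|-\text{h.c.}\) tensored with spectators. On the shared qubit, the product of a raising and a lowering operator is \(\tfrac{I\pm Z}{2}\), so the dressing appears as an explicit projector. The Pauli form is then recovered only at the end. A secondary worry is edge cases, such as levels with only two modes or the top significance level. There the needed \(Z\) must come from \(\hat{B}\) commutators alone, and I would check these small cases explicitly using the Jacobian criterion of Lemma~\ref{lemma:jacobian_span} as a sanity test.
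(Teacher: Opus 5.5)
Your route is the paper's. Your first commutator is exactly its identity \([\hat{G}_{A;ik},\hat{G}_{B;kjl}]=-i\,\hat{G}_{B;ijl}Z_k\), and you then use the same dressing induction, connectivity, \([E_{xy},E_{yz}]=E_{xz}\) and Theorem~\ref{thm:settogroup}.

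The genuine gap is the edge case you defer as a ``secondary worry''. It cannot be closed, because the statement is false there. Every level hosts one qubit per mode, so ``a level with only two modes'' means a two-mode system. In that case \(\hat{G}_{A;a_1b_1}\) commutes with every generator:
\begin{itemize}
\item every other generator either has disjoint support or is a \(\hat B\) on \((a_1,b_1,\cdot)\);
\item \([\hat{G}_{A;ab},\hat{G}_{B;abc}]=0\) identically, since \(\hat A\) needs unequal low bits and \(\hat B\) needs equal ones.
\end{itemize}
So \(\hat{G}_{A;a_1b_1}\) is central in the generated algebra. On every odd-\(N\) sector it acts nontrivially, which is impossible in \(\mathfrak{so}(w)\) for \(w\geq 3\), since that algebra has trivial centre.

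Your proposed rescue cannot help. There is no third same-level qubit \(a'\). The \(\hat B\)--\(\hat B\) overlap gives \([\hat{G}_{B;abc},\hat{G}_{B;abc'}]=\tfrac{i}{2}\hat{G}_{A;cc'}(Z_a+Z_b)\). That is a sum, not a \(Z_aZ_b\) combination, and it dresses only the upper-level \(\hat A\).

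A concrete counterexample: two modes of two qubits each at \(N=3\). Here \(w=4\), with occupations \((0,3),(1,2),(2,1),(3,0)\). All \(\hat B\)'s annihilate this sector, because the low bits always differ. The two \(\hat A\)'s act on disjoint qubits and generate a \(2\)-dimensional abelian algebra. The sector factorises as (mode holding the \(2^0\) bit)\(\otimes\)(mode holding the \(2^1\) bit). Hence only product states are reachable from \(|0,3\rangle\), and \((|0,3\rangle+|3,0\rangle)/\sqrt2\) is not. No Jacobian check can rescue this, and the proposition needs at least three modes. The paper's proof has the same hole: its claim that the overlap conditions for \eqref{eq:BEMPA_comm} are ``always satisfiable'' silently requires three distinct qubits \(i,k,j\) on one level.

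Even with three or more modes, your Step 1 lacks a cross-level mechanism, as does the paper's assertion that the inductive dressing ``carries through''. \(\hat A\)--\(\hat A\) commutators place \(Z\)'s only on the same level. \(\hat A\)--\(\hat B\) commutators dress only \(\hat B\)'s, and only with \(Z\)'s on the \(\hat B\)'s own two levels. Yet isolating \(E_{xy}\) for an \(\hat A\) move requires projectors on spectators at every level.

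You need an identity such as
\[
[\hat{G}_{B;abc},\hat{G}_{B;a'bc}]=-\tfrac{i}{2}\hat{G}_{A;aa'}(Z_b-Z_c),
\]
with \(a,a',b\) distinct low qubits. Combined with \(\hat{G}_{A;aa'}Z_b\) from same-level dressing, this yields \(\hat{G}_{A;aa'}Z_c\). You then still need an actual induction producing arbitrary multi-level \(Z\)-products on both \(\hat A\)- and \(\hat B\)-type generators. Whether that induction closes depends on which \(\hat B\) placements the gate set contains (high qubit in one of the two low modes, or in any mode). That choice must therefore be stated as a hypothesis.

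Your connectivity step and the final appeal to Theorem~\ref{thm:settogroup} are not where the problem lies.
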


\begin{proof}[Proof]
  The argument follows the projection mechanism of Proposition~\ref{prop:universal_so_w}. A single application of a bare BEMPA \(\hat{A}\) or \(\hat{B}\) gate acts symmetrically across all possible configurations of the spectator qubits. To isolate individual rotations, the projection mechanism requires both a bare generator \(L\) and its \(Z\)-dressed counterpart \(L\, Z_k\) for some spectator qubit \(k\), so that their linear combination yields the projector \((I \pm Z_k)/2\). The key identity that realises this for BEMPA is the commutator of a bare 2-body \(\hat{G}_{\lnsp A}\) gate with a bare 3-body \(\hat{G}_{\lnsp B}\) gate sharing one qubit,
  \begin{equation}\label{eq:BEMPA_comm}
    [\hat{G}_{\lnsp A;\lsp ik},\; \hat{G}_{\lnsp B;\lsp kjl}] = -i\lsp \hat{G}_{\lnsp B;\lsp ijl}\lsp Z_k\,,
  \end{equation}
  as can be verified by direct computation using the Pauli decompositions above.

  Since the bare \(\hat{G}_{\lnsp B;\lsp ijl}\) is also available as a native gate, the pair \(\hat{G}_{\lnsp B;\lsp ijl}\) and \(\hat{G}_{\lnsp B;\lsp ijl}\lsp Z_k\) enables the projection mechanism on qubit \(k\). The same mechanism extends to every spectator qubit: for any spectator qubit \(k'\) at significance level \(2^{k'-1}\), there exists at least one \(\hat{G}_{\lnsp A}\) generator pairing qubits of that significance across two bosonic modes, and at least one \(\hat{G}_{\lnsp B}\) generator bridging that level to the next. This is guaranteed by the structure of the binary encoding, in which every significance level hosts one qubit per bosonic mode, ensuring that the overlap conditions for~\eqref{eq:BEMPA_comm} are always satisfiable. The inductive dressing argument of Proposition~\ref{prop:universal_so_w} then carries through: since the \(Z\) factors produced by~\eqref{eq:BEMPA_comm} act on qubits disjoint from the active sites of subsequent commutators, iterating the procedure builds up an arbitrary product of spectator \(Z\) operators. Linear combinations of differently dressed generators yield the full projector product that isolates the single-plane generator \(E_{xy} = |x\rangle\langle y| - |y\rangle\langle x|\) for each pair of basis states connected by a single \(\hat{A}\) or \(\hat{B}\) transition.

  To obtain \(E_{xy}\) for pairs not connected by a single transition, we use the connectivity of the configuration graph. The \(\hat{A}\) gates redistribute amplitude within each significance level and the \(\hat{B}\) gates transfer amplitude between adjacent levels. Their combined action connects all valid binary configurations that share the same total particle number, since any redistribution of particle counts among the bosonic modes can be achieved by a sequence of such moves. Iterated application of the commutator identity \([E_{xy}, E_{yz}] = E_{xz}\) along paths in this graph then constructs all \(w_{\text{boson}}(w_{\text{boson}}-1)/2\) generators of \(\mathfrak{so}(w_{\text{boson}})\).
\end{proof}

\section{Application to the fuzzy sphere regularisation of the 3D Ising model}\label{sec:fuzzy}
We now turn our attention to the physical context that strongly motivated the preceding algebraic developments: the non-perturbative simulation of conformal field theories on quantum computers. We consider the non-commutative fuzzy sphere geometry as a regularisation of the 3D Ising critical point \cite{Zhu:2022gjc}. This approach is particularly advantageous because, unlike lattice regularisations, it preserves the rotation symmetry that is an essential ingredient of any CFT in three dimensions.

\subsection{Lowest Landau level physics and fuzzy sphere}
Consider electrons (charged particles) on a sphere with a magnetic monopole of charge \(4\pi s\), \(2s\in\mathbb{Z}\), at its centre. The purpose of the monopole is to create a radial magnetic field. If we neglect electron-electron interactions, then due to their interaction with the magnetic field the electrons will occupy Landau levels~\cite{Haldane:1983xm}. As it turns out, the interaction Hamiltonian is given in terms of the \(SU(2)\) Casimir operator and, consequently, its eigenvalue problem is reduced to obtaining the irreducible representations of \(SU(2)\). For electrons in the lowest Landau level (LLL) in the presence of the monopole, such a representation is furnished by the functions\footnote{\(\Phi_m(\theta,\phi), m=-s,-s+1,\ldots,s\), furnish a spin-\(s\) representation of \(SU(2)\).} 
\begin{equation}\label{LandauOrbs}
  \Phi_m(\theta,\phi)=(-1)^m\sqrt{\frac{(2s+1)!}{4\pi(s-m)!(s+m)!}}\,e^{im\phi}\cos^{s-m}\Big(\frac{\theta}{2}\Big)\sin^{s+m}\Big(\frac{\theta}{2}\Big)\,,\qquad m=-s,-s+1,\ldots,s\,,
\end{equation}
where \(\theta,\phi\) are coordinates on the sphere. These are obtained as the \(l=s\) case of \(_sY_{lm}(\theta,\phi)\) spin-weighted or monopole spherical harmonics~\cite{Wu:1976ge, *Newman:1966ub, *Dray:1985mnh}. There are obviously \(2s+1\) orbitals in the LLL, so \(N=2s+1\) electrons will fill the LLL. 

The Landau orbitals \eqref{LandauOrbs} are in one-to-one correspondence with states on the fuzzy sphere non-commutative geometry~\cite{Madore:1991bw, *Hasebe:2010vp}. Here the fuzzy sphere of radius \(R\) is a fuzzy manifold whose coordinates \(\hat{x}_i,i=1,2,3\), satisfy \(SU(2)\) commutation relations,
\begin{equation}
  [\hat{x}_i, \hat{x}_j] = i\frac{R}{s} \varepsilon_{ijk}\hat{x}_k\,,
\end{equation}
where \(\varepsilon_{ijk}\) is the fully antisymmetric Levi--Civita symbol with \(\varepsilon_{123}=1\). Thus, fuzzy sphere geometry and LLL physics are equivalent.

To realise a phase transition in the same universality class as the 3D Ising model, which must involve the breaking of a \(\mathbb{Z}_2\) global symmetry, let us assign a \(\mathbb{Z}_2\) isospin quantum number to each electronic mode, which we will denote by an up or down arrow. This effectively doubles the number of electrons that can fit in the LLL, since now there are two distinct electron states that can occupy each Landau orbital. With the additional isospin quantum number, \(N=2s+1\) electrons provide a half-filling of the LLL.

So far the electrons we have considered have been interacting only with the background magnetic field of the monopole. Let us now add electron-electron interactions and assume that all electrons remain in the LLL. To represent the interactions, let us consider second-quantisation creation and annihilation operators \(c^{\dagger}_{m\uparrow}, c^{\dagger}_{m\downarrow}\) and \(c_{m\uparrow}, c_{m\downarrow}\), respectively, for \(m=-s,-s+1,\ldots,s\).\footnote{As an example, the operator \(c^{\dagger}_{s\uparrow}\) creates an electron in a state with \(S_z\) eigenvalue \(s\) and \(\mathbb{Z}_2\) isospin \(\uparrow\) in the LLL.} For brevity of notation we may define the 2-vector \(\mathbf{c}_m=(c_{m\uparrow} \quad c_{m\downarrow})^T\), in terms of which we construct the Hamiltonian
\begin{equation}
  H=\sum_{m_1,\ldots,m_4=-s}^sV_{m_1,m_2,m_3,m_4}\big[(\mathbf{c}_{m_1}^\dag\mathbf{c}^{\phantom{\dag}}_{m_4}) (\mathbf{c}_{m_2}^\dag\mathbf{c}_{m_3}^{\phantom{\dag}})-(\mathbf{c}_{m_1}^\dag\sigma^z\mathbf{c}_{m_4}^{\phantom{\dag}})(\mathbf{c}_{m_2}^\dag\sigma^z\mathbf{c}_{m_3}^{\phantom{\dag}})\big]\\-h\sum_{m=-s}^s\mathbf{c}_m^\dag\sigma^x\mathbf{c}_m^{\phantom{\dag}}\,,
    \label{Hamiltonian}
\end{equation}
where \(\sigma^x\) and \(\sigma^z\) are the first and third Pauli matrices, respectively, \(h\) is a transverse magnetic field and the interaction potential is given by
\begin{equation}      
  V_{m_1,m_2,m_3,m_4}=\delta_{m_1+m_2,
  m_3+m_4}\sum_{l=0}^{2s}V_{l}(4s-2l+1)
    \begin{pmatrix}
        s & s & 2s-l \\ m_1 & m_2 & -m_1-m_2
    \end{pmatrix}\begin{pmatrix}
        s & s & 2s-l \\ m_3 & m_4 & -m_3-m_4
    \end{pmatrix},
\end{equation}
where \(V_l\) are coefficients that depend on the specific type of electron-electron interactions one uses and \(\begin{psmallmatrix} j_1 & j_2 & j_3 \\ m_1 & m_2 & m_3\end{psmallmatrix}\) is the Wigner \(3j\) symbol. One can easily check that \(H\) preserves particle number and is invariant under the \(\mathbb{Z}_2\) transformation
\begin{equation}
  c_{m\uparrow}\leftrightarrow c_{m\downarrow}\,.
  \label{Z2transform}
\end{equation}
By virtue of being placed on a sphere, it also enjoys an \(SU(2)\) rotation symmetry, with \(\mathbf{c}_m, m=-s,-s+1,\ldots,s,\) forming a spin-\(s\) representation of \(SU(2)\).

Following \cite{Zhu:2022gjc}, we choose
\begin{equation}\label{choiceVh}
  V_0\ne0\,,\qquad V_1=1\,,\qquad V_l=0\text{ for }l>1\,,
\end{equation}
which describe ultra-local density-density interactions on the physical sphere. One can then see that by tuning the couplings \(V_0\) and \(h\) the system will move between different phases. For \(h=0\) there will be two degenerate ground states, namely
\begin{equation}
  \ket{\Psi_\uparrow}=\prod_{m=-s}^sc_{m\uparrow}^\dag\ket{0}\,,\qquad
  \ket{\Psi_\downarrow}=\prod_{m=-s}^sc_{m\downarrow}^\dag\ket{0}\,,
\end{equation}
corresponding to a phase in which the \(\mathbb{Z}_2\) symmetry is spontaneously broken. For \(h\gg V_0\) the transverse magnetic field term in
(\ref{Hamiltonian}) will dominate, so that there will be a single ground state,
\begin{equation}
  \ket{\Psi_x}=\prod_{m=-s}^s(c_{m\uparrow}^\dag
  +c_{m\downarrow}^\dag)\ket{0}\,,
\end{equation}
corresponding to a phase with unbroken \(\mathbb{Z}_2\) symmetry. One then expects that there will be a phase transition between these regimes lying in the Ising universality class. This is attained on a line in the \(V_0\)-\(h\) plane. The point 
\begin{equation}\label{V0h}
  V_0=4.75\,, \qquad h=6.32    
\end{equation}
lies on that critical line and yields surprisingly small finite-size effects~\cite{Zhu:2022gjc}.

\subsection{State-operator correspondence}
Let us briefly describe how conformal field theories allow the extraction of physical observables from the fuzzy sphere construction. The interested reader can find more details in standard references such as \cite{DiFrancesco:1997nk}.

It is widely believed that the 3D Ising model at criticality is a CFT. CFTs are quantum field theories that are invariant under the conformal group, whose generators include translations \(P_\mu\), rotations \(M_{\mu\nu}\), dilatations \(D\), and special conformal transformations \(K_\mu\). The fundamental degrees of freedom of a CFT are captured by operators appearing in correlation functions. Any operator \(\mathcal{O}\) has a well-defined scaling dimension \(\Delta_\mathcal{O}\), which is revealed by the action of \(D\) on it and which governs its 2-point correlation function:
\begin{equation} 
  \langle\mathcal{O}(x)\mathcal{O}(0)\rangle=\frac{1}{|x|^{2\Delta_{\mathcal{O}}}}\,.
\end{equation}
Computing \(\Delta_{\mathcal{O}}\) for the operators of a given CFT is a highly non-trivial problem, especially in strongly-coupled cases like the 3D Ising model.

One crucial feature of conformal transformations is that they can map flat space \(\mathbb{R}^3\) to the ``cylinder'' \(S^2\times\mathbb{R}\). Under this mapping, the generator of dilatations \(D\) on \(\mathbb{R}^3\) becomes the generator of time translations on \(S^2\times\mathbb{R}\), i.e.\ the Hamiltonian on \(S^2\). Operators on \(\mathbb{R}^3\) give rise to states on \(S^2\) and vice versa; this is the state-operator correspondence. The scaling dimensions of the operators are therefore identical to the eigenvalues of the \(S^2\) Hamiltonian.

The fuzzy sphere construction is based on the idea that \(H\) of \eqref{Hamiltonian} with the parameter choices \eqref{choiceVh} and \eqref{V0h} is precisely such a Hamiltonian. Extracting its energy spectrum therefore gives direct access to the low-lying scaling dimensions of the 3D Ising CFT. Moreover, the identification of eigenstates enables the determination of 3-point function coefficients, which constitute additional CFT observables. As explained in~\cite{Han:2023yyb}, this is achieved by computing ratios of appropriate amplitudes and becomes precise in the limit of infinite system size.

\subsection{Classical extraction via exact diagonalisation}
The second-quantisation Hamiltonian (\ref{Hamiltonian}) can be conveniently constructed using \href{https://quantumai.google/openfermion}{\texttt{OpenFermion}}~\cite{McClean:2019kvs}, which provides methods for performing computations with fermionic creation and annihilation operators directly. Since \(H\) preserves particle number and total azimuthal angular momentum \(S_z\), it is block-diagonal in these quantum numbers. Working at half-filling and restricting to the \(S_z = 0\) sector reduces the Hilbert space dimension considerably.\footnote{The dimension of the Hilbert space sufficient for our considerations is given by sequence A125809 of the Online Encyclopedia of Integer Sequences~\cite{oeis}.} This sector is sufficient because every spin-\(\ell\) multiplet in the spectrum contains an \(\ell_z = 0\) component that appears in this sector.

To find the spectrum we can proceed with exact diagonalisation (ED) for this small system size. To extract operator dimensions, the ground-state energy is subtracted from all energies, so that the ground state corresponds to the identity operator with scaling dimension zero. The shifted spectrum is then rescaled by an overall factor, fixed by demanding that the leading \(\mathbb{Z}_2\)-even state with spin two has energy exactly \(3\), as required for the energy-momentum tensor of the 3D Ising CFT. Table~\ref{Tab:spec} shows the resulting low-lying spectrum for \(N=4\) electrons, compared with conformal bootstrap results~\cite{Simmons-Duffin:2016wlq}.

\begin{table}[ht]
\centering
  \begin{tabular}{c|c c c c c}
    Operator & Bootstrap &\(N=4\) & Errors & \(\ell\) & \(\mathbb{Z}_2\)\\
    \hline
    \(\mathds{1}\) & 0 & 0 & N/A & \(0\) & Even\\
    \(\sigma\) & 0.518 &      \(0.51463\) & \(0.65\%\) & \(0\) & Odd\\
    \(\epsilon\) & 1.413 &\(1.35866\) & \(3.8\%\) & \(0\) & Even \\
    \(\partial_\mu\sigma\) & 1.518 & \(1.52337\) & \(0.35\%\) & \(1\) & Odd \\
    \(\partial_\mu\epsilon\)&2.413&\(2.32689\) &\(3.6\%\)& \(1\) & Even \\
    \(\Box\sigma\) &2.518&\(2.39615\) &\(4.8\%\) & \(0\) & Odd \\
    \(\partial_{\mu}\partial_\nu\sigma\)&2.518&\(2.44305\) &\(3\%\)& \(2\) & Odd \\
    \(\partial_{\mu}\partial_\nu\partial_\rho\sigma\)&3.518&\(2.86959\) &\(18\%\)&
    \(3\) & Odd \\
    \(T_{\mu\nu} \)&\(3\)&\(3\) &N/A& \(2\) & Even\\
    \(\partial_\mu\partial_\nu\epsilon\)&3.413&\(3.12754\) &\(8.4\%\)& \(2\) & Even\\
    \(\partial_\mu\Box\sigma\)&3.518&\(3.27212\) &\(7\%\)& \(1\) & Odd \\
    \(\Box\epsilon\)&3.413&\(3.54366\) &\(3.8\%\)& \(0\) & Even\\
    \(\partial_\rho T_{\mu\nu}\)&4&\(3.67311\) &\(8.2\%\)& \(3\) & Even\\
    \(\epsilon'\)&3.830&\(4.02972\) &\(5.2\%\)& \(0\) & Even \\
    \(\varepsilon_{\mu}{\!}^{\rho\lambda}\partial_{\lambda}T_{\rho\nu}\)&4&\(4.
06989\) &\(1.7\%\)& \(2\) & Even \\
    \(\sigma_{\mu\nu}\)&4.18&\(4.23715\) &\(1.4\%\)& \(2\) & Odd\\
    \(\sigma_{\mu\nu\rho}\)&4.638&\(4.61834\) &\(0.42\%\)& \(3\) & Odd\\
    \(\partial_\mu\partial_\nu T_{\rho\lambda}\)&5&\(4.88471\) &\(2.3\%\)& \(4\) & Even
  \end{tabular}
  \caption{Low-lying spectrum of the 3D Ising CFT as obtained with ED from the fuzzy-sphere regularisation method for \(N=4\) electrons. The full obtained spectrum is presented and conformal bootstrap results~\cite{Simmons-Duffin:2016wlq} are also included for comparison. The naming of operators follows standard notation in the physics literature.}
  \label{Tab:spec}
\end{table}

The primary limitation of this classical approach is the rapid growth of the Hilbert space dimension with system size. While the \(N=4\) case is easily tractable, the dimension grows quickly: for example, the \(N=40\) case requires a Hilbert space of dimension 2,944,055,592 (for the \(S_z=0\) sector). On a quantum computer, by contrast, we would need only 80 qubits (two per fermionic mode). This provides strong motivation for exploring quantum simulation of this problem, as it holds the promise of reaching system sizes far beyond the reach of any classical method. Cases requiring even more fermionic modes, such as the Heisenberg model recently studied in~\cite{Han:2023lky, *Dey:2025zgn, *Guo:2025odn} and others \cite{Zhou:2023qfi, *Zhou:2024zud, *Fan:2025bhc, *ArguelloCruz:2025zuq, *EliasMiro:2025msj, *He:2025ong, *Taylor:2025odf, *Zhou:2025rmv, *Voinea:2025iun, *Tang:2025wtj, *Huffman:2026qqq, *Dey:2026cso, *Stergiou:2026rir}, would also benefit greatly from a realisation on a quantum computer.

\subsection{Gates for preserving spin symmetries}
For the most efficient implementation of the fuzzy sphere model, states must not only be half-filled but also maintain a total azimuthal angular momentum \(S_z=0\). This requires unitary operations to commute with the \(S_z\) operator. Symmetry constraints such as these are commonly encountered in a variety of problems.

The total azimuthal angular momentum on the fuzzy sphere is the sum of the magnetic quantum numbers \(m\) of all occupied orbitals:
\begin{equation}
  S_z = \sum_{k} m_k \frac{I - Z_k}{2}\,.
\end{equation}
For any parametrised continuous rotation to unconditionally preserve the \(S_z=0\) constraint, its Lie algebra generator must commute with the \(S_z\) operator.

If we consider the 2-qubit \(G\)-gate, we have
\begin{equation}
  [L_{ij}, S_z] = i (m_i - m_j) (X_i X_j + Y_i Y_j)\,.
\end{equation}
This commutator vanishes if and only if \(m_i = m_j\). For our encoding, this restricts \(G^{(2)}\) gates to swapping the two isospins (\(\uparrow\) and \(\downarrow\)) of the \emph{same} \(m\) orbital. Such operations cannot change overall orbital occupancies and are thus individually insufficient to explore the full \(S_z=0\) subspace.

To achieve transitions that change orbital occupancies while simultaneously preserving \(S_z\), we must introduce 4-qubit interaction gates, denoted \(G^{(4)}\), which act on an ordered quartet of qubits \((i, j, k, l)\) and generate the transition \(|1100\rangle \leftrightarrow |0011\rangle\). This operation preserves \(S_z\) provided \(m_i + m_j = m_k + m_l\). The continuous parametrised gate is given by 
\begin{equation}  
  G^{(4)}_{ijkl}(\theta) = \exp\Big[i\frac{\theta}{8} L_{ijkl}\Big]\,,
\end{equation}
  with the explicit generator
\begin{equation}
  \begin{split}
    L_{ijkl} &= X_i X_j X_k Y_l + X_i X_j Y_k X_l - X_i Y_j X_k X_l - Y_i X_j X_k X_l \\
    &\quad + X_i Y_j Y_k Y_l + Y_i X_j Y_k Y_l - Y_i Y_j X_k Y_l - Y_i Y_j Y_k X_l\,.
  \end{split}
\end{equation}
Analogous to the 2-qubit gate, the improper reflection form of the rotation is 
\begin{equation} 
  A^{(4)}_{ijkl}(\theta) = A^{(4)}_{ijkl}(0)G^{(4)}_{ijkl}(-\theta)\,,
\end{equation}
where the fixed reflection component 
\begin{equation}
  A^{(4)}_{ijkl}(0) = I - 2 \left(\frac{I - Z_i}{2}\right) \left(\frac{I - Z_j}{2}\right) \left(\frac{I + Z_k}{2}\right) \left(\frac{I + Z_l}{2}\right)
\end{equation}
operates as the identity on all states except \(|1100\rangle\), to which it applies a \(-1\) phase. For clarity of presentation, we will use \(G^{(2)}\) for \eqref{eq:GivRot} and \(A^{(2)}\) for \eqref{eq:Agate} from now on.

\begin{proposition}\label{prop:symmetry_preserving}
  Circuits built from 2-qubit \(G^{(2)}\) and 4-qubit \(G^{(4)}\) gates, applied directly without non-local \(Z\)-strings and chosen to respect the strict symmetry constraint \(S_z=0\), dynamically generate the full continuous orthogonal group on that symmetry-constrained target subspace.
\end{proposition}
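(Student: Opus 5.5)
The plan is to follow the template of Proposition~\ref{prop:universal_so_w}: by Theorem~\ref{thm:settogroup} it suffices to show that the native generators, restricted to the target subspace $\mathcal{W}$ (half-filling, $S_z=0$, of dimension $w$), generate $\mathfrak{so}(w)$. The native generators are the symmetry-allowed $L_{i_\uparrow i_\downarrow}$ (isospin swaps at fixed $m$) and the $L_{ijkl}$ with $m_i+m_j=m_k+m_l$. All of them are real antisymmetric in the computational basis and commute with both particle number and $S_z$, so they preserve $\mathcal{W}$ and lie in $\mathfrak{so}(w)$. It then remains to produce every single-plane generator $E_{xy}$ for $x,y\in\mathcal{W}$. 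I would do this in two stages. First, I isolate $E_{xy}$ for each pair $x,y$ joined by a single native transition, using Pauli $Z$ dressing. Second, I show that the graph on $\mathcal{W}$ whose edges are these native transitions is connected, so that $[E_{xy},E_{yz}]=E_{xz}$ fills out $\mathfrak{so}(w)$.

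For the dressing stage I would compute commutators analogous to \eqref{eq:L2L2-Z-dress} and \eqref{eq:BEMPA_comm}. The first is between a 2-qubit isospin swap $L_{a b}$, where $a=(m,\uparrow)$ and $b=(m,\downarrow)$, and a 4-qubit generator $L_{a jkl}$ that shares exactly qubit $a$. Since $m_b=m_a$, the replaced quartet $(b,j,k,l)$ still satisfies the $S_z$ condition. A direct Pauli computation, using the fact that $L_{ab}$ restricted to the single-excitation space of $\{a,b\}$ acts as a hop $a\leftrightarrow b$, should give $[L_{ab},L_{ajkl}]\propto L_{bjkl}Z_a$. This is the analogue of \eqref{eq:L2L2-Z-dress}, and it places a spectator $Z$ on the vacated qubit. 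A second identity comes from commuting two 4-qubit generators that overlap on two qubits, for example $[L_{ijkl},L_{klpq}]\propto L_{ijpq}Z_kZ_l$ when the $S_z$ conditions chain. Combined with the isospin swaps, this should let me place a $Z$ on any spectator. I would then run the inductive argument of Proposition~\ref{prop:universal_so_w}: commutation with a generator supported away from the existing $Z$'s transports the dressing, and this produces $L\prod_{k\in S}Z_k$ for arbitrary spectator sets $S$. Taking linear combinations then gives the projectors $\prod_k (I\pm Z_k)/2$, which isolate $E_{xy}$ for every native edge. This includes the edges of the isospin swaps themselves, which are needed for connectivity.

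For connectivity I would argue combinatorially. A basis state in $\mathcal{W}$ is a pair of subsets $(U,D)$ of orbitals, with $|U|+|D|=N$ and $\sum_{U}m+\sum_{D}m=0$. Isospin swaps move an electron between $(m,\uparrow)$ and $(m,\downarrow)$ whenever exactly one of the two is occupied. Pair hoppings move two electrons from orbitals $m_i,m_j$ to orbitals $m_k,m_l$ with equal sum. I would first use isospin swaps to bring any state to a canonical form, and then show that the multiset of occupied $m$ values can be moved between any two $S_z=0$ configurations by pair moves of the form $(m_1,m_2)\to(m_1+1,m_2-1)$, which is a "moving electrons together or apart" argument on the orbital occupation numbers (each $\le 2$).

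I expect the main obstacle to be this connectivity step, together with the availability of overlapping generators needed for dressing at small $s$. Pauli exclusion, with at most two electrons per orbital, can block an elementary pair move, and for small systems (such as $N=4$) some spectator qubits may not share an $S_z$-compatible quartet with the active pair. I would first attempt a general potential-function argument, for instance reducing $\sum m^2$ over occupied orbitals toward a unique minimiser. If edge cases resist a uniform treatment, I would fall back on verifying the small cases explicitly through the rank of the generated Lie algebra, or through the Jacobian criterion of Lemma~\ref{lemma:jacobian_span}.
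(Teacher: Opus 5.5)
Your overall route is the paper's: reduce to the Lie algebra via Theorem~\ref{thm:settogroup}, isolate $E_{xy}$ on native edges by $Z$ dressing, then use connectivity of the $S_z=0$ configuration graph together with $[E_{xy},E_{yz}]=E_{xz}$. Your 2--4 identity $[L_{ab},L_{ajkl}]\propto L_{bjkl}Z_a$ is the paper's $[L_{ij},L_{jklm}]=-2i\,L_{iklm}Z_j$. Your remark that $m_b=m_a$ keeps the new quartet $S_z$-compatible is a detail the paper leaves implicit. Your 4--4 identity, however, is false. Take $a=|0\rangle\langle1|$ on each qubit (no strings) and write $L_{ijkl}\propto T_1-T_1^\dagger$ with $T_1=a_k^\dagger a_l^\dagger a_ia_j$, and $L_{klpq}\propto T_2-T_2^\dagger$ with $T_2=a_p^\dagger a_q^\dagger a_ka_l$. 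The cross terms $[T_1,T_2^\dagger]$ and $[T_1^\dagger,T_2]$ vanish because they contain $(a_k^\dagger)^2$ or $a_k^2$. The remaining term is $[T_1,T_2]=a_p^\dagger a_q^\dagger a_ia_j\,(n_kn_l-\bar n_k\bar n_l)$, where $n=(I-Z)/2$, $\bar n=(I+Z)/2$, and $n_kn_l-\bar n_k\bar n_l=-\tfrac12(Z_k+Z_l)$. Hence $[L_{ijkl},L_{klpq}]\propto L_{ijpq}(Z_k+Z_l)$, which is the form the paper states. A sum of this kind records only how many of $k,l$ are occupied, not which one, so it does not feed the product-building induction you want to run.

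This is not cosmetic, because ``put a $Z$ on any spectator'' is exactly where the $S_z$ restriction bites. The induction of Proposition~\ref{prop:universal_so_w} used $L_{ik'}$ for arbitrary $k'$. Here 2-qubit generators exist only between isospin partners, so 2--4 commutators can only place $Z$'s on partners of active qubits. Iterating does give every product of $Z$'s over those partners, e.g.\ $[L_{ii'},L_{i'jkl}Z_{j'}]\propto L_{ijkl}Z_{i'}Z_{j'}$; all intermediate quartets are $S_z$-compatible because partners share $m$.

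For $N=4$ this closes the argument. The occupation patterns $(n_{-3/2},n_{-1/2},n_{1/2},n_{3/2})=(0,2,2,0),(1,1,1,1),(2,0,0,2)$ are linked by pair hops $\{-\tfrac12,\tfrac12\}\leftrightarrow\{-\tfrac32,\tfrac32\}$. These hops already connect all $18$ states and involve four distinct orbitals, so every spectator is a partner. Your worry about small systems is therefore reversed.

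For $N\geq5$, every 4-qubit gate leaves some orbital untouched. The dressings you can readily build on such an orbital are pair sums $(Z_k+Z_l)$ and their products from chained 4--4 commutators. You would have to show that these separate, within $\mathcal{W}$, the spectator configurations compatible with each edge, or else verify closure directly as in your fallback. The paper's proof does not supply this step either; it asserts that the induction of Proposition~\ref{prop:universal_so_w} carries over.

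Your connectivity plan is more explicit than the paper's one-line claim. Note that every quartet with $m_i+m_j=m_k+m_l$ is native, so moves $(m_1,m_2)\to(m_1+t,m_2-t)$ are available for any $t$, not just unit steps, which helps you get around Pauli blocking.
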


\begin{proof}[Proof]
  The argument proceeds as in Proposition~\ref{prop:universal_so_w}, but now both \(G^{(2)}\) and \(G^{(4)}\) generators contribute. Evaluating nested commutators of overlapping gates produces intermediate Pauli \(Z\) operators on shared spectator qubits. In particular, commuting overlapping generators yields \(Z\)-strings on the shared sites, just as commuting \(L_{ik}\) with \(L_{kj}\) produces \(Z_k\) in the 2-qubit case. For instance, commuting a 2-body and a 4-body generator sharing one qubit gives
  \begin{equation}
    [L_{ij}, L_{jklm}] = -2i\lsp L_{iklm}\lsp Z_j\,,
  \end{equation}
  while commuting two 4-body generators that overlap on two qubits gives
  \begin{equation}
    [L_{ijkl}, L_{klmn}] = -4i\lsp L_{ijmn}\lsp(Z_k + Z_l)\,.
  \end{equation}
  These identities provide the base case for the inductive dressing of Proposition~\ref{prop:universal_so_w}: since the \(Z\) factors act on qubits disjoint from the active sites of subsequent commutators, the same induction builds up an arbitrary product of spectator \(Z\) operators. Linear combinations of differently dressed generators then yield the full projector product that isolates the single-plane generator \(E_{xy}\) for each pair of basis states connected by a single allowed transition.

  To obtain \(E_{xy}\) for pairs not directly connected, we use the connectivity of the configuration graph on the \(S_z=0\) subspace. The \(m_i = m_j\) transitions (from \(G^{(2)}\) gates) freely exchange the isospin of any single orbital without altering orbital occupancies, while the \(m_i + m_j = m_k + m_l\) transitions (from \(G^{(4)}\) gates) transfer a pair of particles between any two pairs of orbitals with equal total magnetic quantum number. Since repeated application of these two types of moves can rearrange the particle occupancies into any valid \(S_z=0\) pattern, the graph is fully connected. Iterated application of the commutator identity \([E_{xy}, E_{yz}] = E_{xz}\) along paths in this graph therefore constructs all generators of \(\mathfrak{so}(w_0)\), where \(w_0\) is the dimension of the symmetry-constrained subspace.
\end{proof}

\subsection{Variational quantum simulation}
The \href{https://www.python.org/}{\texttt{Python}} code used in this subsection can be found in a dedicated \href{https://github.com/}{\texttt{GitHub}} repository:
\begin{center}
    \href{https://github.com/andstergiou/fuzzy-ising-vqe}{\texttt{andstergiou/fuzzy-ising-vqe}}.
\end{center}

For the \(N=4\) case we have \(s=3/2\), with the magnetic quantum number taking the values \(m=-\frac32,-\frac12,\frac12,\frac32\). Each value of \(m\) accommodates two isospin states, giving \(N_{\text{q}}=8\) qubits in total. Restricting to the half-filled \(S_z=0\) subspace leaves \(w_0=18\) allowed basis states.

Using Proposition~\ref{prop:symmetry_preserving} and Lemma~\ref{lemma:jacobian_span}, we find that a circuit with exactly \(w_0-1=17\) independently parametrised gates has a full-rank Jacobian at a reference point, establishing that the Jacobian has rank \(w_0-1\) at almost every parameter value. This local criterion does not, however, certify that the circuit map \(f\colon T^{w_0-1}\to S^{w_0-1}\) is globally surjective. Numerical optimisation over the circuit parameters (minimising \(\lVert f(\boldsymbol\theta)-\mathbf y\rVert^2\) for random targets \(\mathbf y\in S^{w_0-1}\)) reveals that minimal 17-parameter circuits generically miss a small fraction of the sphere: the best such circuit reached 47 out of 50 random targets, with unreachable targets plateauing at residual costs \({\sim}\,10^{-4}\), clearly separated from the \({\sim}\,10^{-15}\) costs of reachable targets. Adding two extra parameters closes this gap: the 19-parameter \(G\)-gate circuit shown in Fig.~\ref{fig:minimal_circuit_G} achieves \(100/100\) random-target reachability across multiple independent seeds, with maximum cost \({\sim}\,10^{-14}\), providing strong numerical evidence for global surjectivity. Even though it is not necessarily minimal or unique, we will use the circuit of Fig.~\ref{fig:minimal_circuit_G} for our VQE tests.

\begin{figure}[ht]
  \centering
  \textsf{{\small\textit{Key:} spanning 2-wire boxes {\sansmath{\(= G^{(2)}\)}};\; single-wire boxes joined by vertical links {\sansmath{\(= G^{(4)}\)}}}}\\[4pt]
  \begin{adjustbox}{max width=\textwidth}
    \begin{quantikz}[row sep=0.3cm, column sep=0.28cm]
      % q0: -3/2↑  gates at cols: 2(G4),5(G4),11(G4),12(G4),18(G2),19(G4)
      \lstick{\small{\sansmath{\(q_0: -3/2_\uparrow\)}}}   & \qw              & \gate[wires=1]{} \vqw{3} & \qw                      & \qw              & \gate[wires=1]{} \vqw{3} & \qw              & \qw                      & \qw                      & \qw              & \qw                      & \gate[wires=1]{} \vqw{3} & \gate[wires=1]{} \vqw{3} & \qw              & \qw                      & \qw                      & \qw              & \qw                      & \gate[wires=2]{} & \gate[wires=1]{} \vqw{2} & \qw \\
      % q1: -3/2↓  gates at cols: 3(G4),7(G4),8(G4),10(G4),14(G4),15(G4),17(G4),18(G2)
      \lstick{\small{\sansmath{\(q_1: -3/2_\downarrow\)}}} & \qw              & \qw                      & \gate[wires=1]{} \vqw{1} & \qw              & \qw                      & \qw              & \gate[wires=1]{} \vqw{2} & \gate[wires=1]{} \vqw{1} & \qw              & \gate[wires=1]{} \vqw{2} & \qw                      & \qw                      & \qw              & \gate[wires=1]{} \vqw{2} & \gate[wires=1]{} \vqw{1} & \qw              & \gate[wires=1]{} \vqw{2} &                  & \qw                      & \qw \\
      % q2: -1/2↑  gates at cols: 3(G4),8(G4),15(G4),19(G4)
      \lstick{\small{\sansmath{\(q_2: -1/2_\uparrow\)}}}   & \qw              & \qw                      & \gate[wires=1]{} \vqw{3} & \qw              & \qw                      & \qw              & \qw                      & \gate[wires=1]{} \vqw{3} & \qw              & \qw                      & \qw                      & \qw                      & \qw              & \qw                      & \gate[wires=1]{} \vqw{3} & \qw              & \qw                      & \qw              & \gate[wires=1]{} \vqw{2} & \qw \\
      % q3: -1/2↓  gates at cols: 2(G4),5(G4),7(G4),10(G4),11(G4),12(G4),14(G4),17(G4)
      \lstick{\small{\sansmath{\(q_3: -1/2_\downarrow\)}}} & \qw              & \gate[wires=1]{} \vqw{2} & \qw                      & \qw              & \gate[wires=1]{} \vqw{1} & \qw              & \gate[wires=1]{} \vqw{2} & \qw                      & \qw              & \gate[wires=1]{} \vqw{2} & \gate[wires=1]{} \vqw{1} & \gate[wires=1]{} \vqw{2} & \qw              & \gate[wires=1]{} \vqw{1} & \qw                      & \qw              & \gate[wires=1]{} \vqw{1} & \qw              & \qw                      & \qw \\
      % q4: 1/2↑   gates at cols: 5(G4),11(G4),14(G4),17(G4),19(G4)
      \lstick{\small{\sansmath{\(q_4: 1/2_\uparrow\)}}}    & \qw              & \qw                      & \qw                      & \qw              & \gate[wires=1]{} \vqw{3} & \qw              & \qw                      & \qw                      & \qw              & \qw                      & \gate[wires=1]{} \vqw{2} & \qw                      & \qw              & \gate[wires=1]{} \vqw{3} & \qw                      & \qw              & \gate[wires=1]{} \vqw{3} & \qw              & \gate[wires=1]{} \vqw{2} & \qw \\
      % q5: 1/2↓   gates at cols: 2(G4),3(G4),7(G4),8(G4),10(G4),12(G4),15(G4)
      \lstick{\small{\sansmath{\(q_5: 1/2_\downarrow\)}}}  & \qw              & \gate[wires=1]{} \vqw{2} & \gate[wires=1]{} \vqw{1} & \qw              & \qw                      & \qw              & \gate[wires=1]{} \vqw{1} & \gate[wires=1]{} \vqw{1} & \qw              & \gate[wires=1]{} \vqw{1} & \qw                      & \gate[wires=1]{} \vqw{2} & \qw              & \qw                      & \gate[wires=1]{} \vqw{1} & \qw              & \qw                      & \qw              & \qw                      & \qw \\
      % q6: 3/2↑   gates at cols: 1(G2),3(G4),4(G2),6(G2),7(G4),8(G4),9(G2),10(G4),11(G4),13(G2),15(G4),16(G2),19(G4)
      \lstick{\small{\sansmath{\(q_6: 3/2_\uparrow\)}}}    & \gate[wires=2]{} & \qw                      & \gate[wires=1]{}         & \gate[wires=2]{} & \qw                      & \gate[wires=2]{} & \gate[wires=1]{}         & \gate[wires=1]{}         & \gate[wires=2]{} & \gate[wires=1]{}         & \gate[wires=1]{}         & \qw                      & \gate[wires=2]{} & \qw                      & \gate[wires=1]{}         & \gate[wires=2]{} & \qw                      & \qw              & \gate[wires=1]{}         & \qw \\
      % q7: 3/2↓   gates at cols: 1(G2),2(G4),4(G2),5(G4),6(G2),9(G2),12(G4),13(G2),14(G4),16(G2),17(G4)
      \lstick{\small{\sansmath{\(q_7: 3/2_\downarrow\)}}}  &                  & \gate[wires=1]{}         & \qw                      &                  & \gate[wires=1]{}         &                  & \qw                      & \qw                      &                  & \qw                      & \qw                      & \gate[wires=1]{}         &                  & \gate[wires=1]{}         & \qw                      &                  & \gate[wires=1]{}         & \qw              & \qw                      & \qw \\
    \end{quantikz}
  \end{adjustbox}
  \caption{A 19-parameter \(G\)-gate circuit that is numerically surjective on the \(S_z=0\) subspace for \(N=4\), starting from the input state \(|01010101\rangle\). The first 17 gates form a minimal spanning circuit of \(G^{(2)}\) gates on the \((q_6,q_7)\) isospin pair interleaved with various \(G^{(4)}\) gates; the two additional gates \(G^{(2)}_{01}\) and \(G^{(4)}_{0624}\) close the local-to-global gap, so that numerical optimisation reaches every tested random target on \(S^{17}\).}
  \label{fig:minimal_circuit_G}
\end{figure}

Since our system involves fermionic creation and annihilation operators, we use \href{https://quantumai.google/openfermion}{\texttt{OpenFermion}}~\cite{McClean:2019kvs} to simulate it. After a Jordan--Wigner transformation, the resulting Hamiltonian is imported into \href{https://pennylane.ai/}{\texttt{PennyLane}}~\cite{Bergholm:2018cyq}, and the circuit in Fig.\ \ref{fig:minimal_circuit_G} is constructed using \href{https://pennylane.ai/}{\texttt{PennyLane}}'s \href{https://docs.pennylane.ai/en/stable/code/api/pennylane.SingleExcitation.html}{\texttt{SingleExcitation}} and \href{https://docs.pennylane.ai/en/stable/code/api/pennylane.DoubleExcitation.html}{\texttt{DoubleExcitation}} rotations. Our setup exploits all symmetries of the problem. Besides the optimised circuit depth obtained, the benefits of using symmetries for convergence of VQE can also be considerable~\cite{Mihalikova:2025ruh}.

For our variational quantum simulation of the fuzzy sphere Hamiltonian for \(N=4\) electrons we use a classical state vector simulator (\href{https://pennylane.ai/devices/lightning-qubit}{\texttt{lightning.qubit}} backend of \href{https://pennylane.ai/}{\texttt{PennyLane}}). For the ground state, the cost function is the energy expectation value,
\begin{equation}\label{eq:vqe_cost}
  C_0(\boldsymbol\theta) = \langle\psi(\boldsymbol\theta)|H|\psi(\boldsymbol\theta)\rangle\,.
\end{equation}
Optimisation proceeds in two phases: a coarse phase using the \texttt{Adam} adaptive gradient method \cite{Kingma:2014} (step size \(0.1\), up to \(500\) iterations) to rapidly navigate the energy landscape and escape local minima, followed by a fine phase using gradient descent (step size \(0.02\)) for precise convergence. Starting from a random initialisation, the VQE converges to the ground-state energy to within \(1.2\times10^{-7}\) in absolute value.

For the excited states we use the variational quantum deflation (VQD) method~\cite{Higgott:2018dkk}, augmenting the cost function with penalty terms that penalise overlaps with the previously found eigenstates:
\begin{equation}\label{eq:vqd_cost}
  C_k(\boldsymbol\theta) = \langle\psi(\boldsymbol\theta)|H|\psi(\boldsymbol\theta)\rangle
  + \sum_{j<k} \beta_j\lvert\langle\psi(\boldsymbol\theta)|\psi_j\rangle\rvert^2\,,
\end{equation}
where \(|\psi_j\rangle\) is the \(j\)-th eigenstate already prepared and \(\beta_j > 0\) is a penalty strength chosen to lift the corresponding eigenvalue above the target. We use \(\beta_0=10\) for the first excited state and \(\beta_0=30\), \(\beta_1=20\) for the second excited state; each VQD step employs the same two-phase \texttt{Adam} (\(1000\) iterations, step size \(0.1\)) then gradient-descent (step size \(0.01\)) strategy.

The convergence curves are displayed in Fig.~\ref{fig:vqe_convergence}, and the final energies are compared with ED in Table~\ref{tab:vqe_results}. All three eigenvalues are recovered to within \(4\times10^{-7}\) in absolute energy. Moreover, the overlap matrix \(|\langle\psi_k(\boldsymbol\theta^*)|\phi_j\rangle|^2\), where \(|\phi_j\rangle\) are the exact eigenvectors, equals the identity matrix to ten significant figures, confirming that the circuit faithfully prepares each target eigenstate.

\begin{figure}[H]
  \centering
  \begin{minipage}[t]{0.32\textwidth}
    \centering
    \begin{tikzpicture}
    \sansmath
    \begin{axis}[
      width=\textwidth,
      height=0.9\textwidth,
      xlabel={\footnotesize\textsf{Iteration}},
      ylabel={\footnotesize\textsf{Cost {\sansmath{\(C_0\)}}}},
      ylabel shift=-8pt,
      xmin=0, xmax=327,
      ymin=-18, ymax=32,
      xtick={0,100,200,300},
      ytick={-10,0,10,20,30},
      tick label style={font=\scriptsize\sffamily},
      title={\footnotesize\textsf{Ground state}},
      title style={yshift=-2pt},
      grid=major,
      grid style={line width=0.2pt,draw=gray!35},
    ]
    \addplot[color=blue!80!black, line width=0.7pt, mark=none]
      table[col sep=space, x index=0, y index=1] {vqe_plot_data/vqe_gs.dat};
    \addplot[red!80!black, dashed, line width=0.8pt, forget plot] coordinates {
      (0,-16.18995794)(327,-16.18995794)
    };
    \end{axis}
    \end{tikzpicture}
  \end{minipage}%
  \hfill
  \begin{minipage}[t]{0.32\textwidth}
    \centering
    \begin{tikzpicture}
    \sansmath
    \begin{axis}[
      width=\textwidth,
      height=0.9\textwidth,
      xlabel={\footnotesize\textsf{Iteration}},
      ylabel={\footnotesize\textsf{Cost {\sansmath{\(C_1\)}}}},
      ylabel shift=-8pt,
      xmin=0, xmax=1078,
      ymin=-10, ymax=17,
      xtick={0,200,400,600,800,1000},
      ytick={-10,-5,0,5,10,15},
      tick label style={font=\scriptsize\sffamily},
      title={\footnotesize\textsf{First excited state}},
      title style={yshift=-2pt},
      grid=major,
      grid style={line width=0.2pt,draw=gray!35},
    ]
    \addplot[color=blue!80!black, line width=0.7pt, mark=none]
      table[col sep=space, x index=0, y index=1] {vqe_plot_data/vqe_es1.dat};
    \addplot[red!80!black, dashed, line width=0.8pt, forget plot] coordinates {
      (0,-8.59299820)(1078,-8.59299820)
    };
    \addplot[black!50, dotted, line width=0.8pt, forget plot] coordinates {
      (1000,-10)(1000,17)
    };
    \end{axis}
    \end{tikzpicture}
  \end{minipage}%
  \hfill
  \begin{minipage}[t]{0.32\textwidth}
    \centering
    \begin{tikzpicture}
    \sansmath
    \begin{axis}[
      width=\textwidth,
      height=0.9\textwidth,
      xlabel={\footnotesize\textsf{Iteration}},
      ylabel={\footnotesize\textsf{Cost {\sansmath{\(C_2\)}}}},
      ylabel shift=-2pt,
      xmin=0, xmax=1032,
      ymin=2, ymax=29,
      xtick={0,200,400,600,800,1000},
      ytick={5,10,15,20,25},
      tick label style={font=\scriptsize\sffamily},
      title={\footnotesize\textsf{Second excited state}},
      title style={yshift=-2pt},
      grid=major,
      grid style={line width=0.2pt,draw=gray!35},
    ]
    \addplot[color=blue!80!black, line width=0.7pt, mark=none]
      table[col sep=space, x index=0, y index=1] {vqe_plot_data/vqe_es2.dat};
    \addplot[red!80!black, dashed, line width=0.8pt, forget plot] coordinates {
      (0,3.61550790)(1032,3.61550790)
    };
    \addplot[black!50, dotted, line width=0.8pt, forget plot] coordinates {
      (1000,2)(1000,29)
    };
    \end{axis}
    \end{tikzpicture}
  \end{minipage}
  \caption{VQE/VQD convergence for the three lowest eigenstates of the fuzzy sphere Hamiltonian (\(N=4\)). Each panel shows the cost function at every iteration (blue curve), with the exact eigenvalue from ED indicated by the red dashed line. For the excited states (centre and right panels), the dotted vertical line marks the transition from the \texttt{Adam} phase to the gradient-descent phase.}
  \label{fig:vqe_convergence}
\end{figure}

\begin{table}[H]
\centering
\begin{tabular}{lcc}
  \hline
  State & ED & VQE/VQD \\
  \hline
  Ground state             & \(-16.18995794\) & \(-16.18995782\) \\
  First excited state      & \(-8.59299820\) & \(-8.59299785\) \\
  Second excited state     & \(\phantom{-}3.61550790\) & \(\phantom{-}3.61550806\) \\
  \hline
\end{tabular}
\caption{Energies of the three lowest eigenstates of the fuzzy sphere Hamiltonian (\(N=4\), \(V_0=4.75\), \(V_1=1\), \(h=6.32\)) from ED and from the VQE/VQD algorithm using the circuit of Fig.~\ref{fig:minimal_circuit_G}.}
\label{tab:vqe_results}
\end{table}

\section{Extension to \texorpdfstring{\(SU(w)\)}{SU(w)} with independent complex phases}\label{sec:suw_extension}
The results of Section~\ref{sec:maths} generate the real orthogonal group \(SO(w)\), the natural setting for states with purely real amplitudes. Many physical applications, e.g.\ systems lacking time-reversal symmetry and complex ground-state preparation, require the full special unitary group \(SU(w)\). We now show that the Pauli \(Z\) dressing mechanism extends directly to \(SU(w)\) once the gates carry independent complex phases.

Replacing the real Givens rotation \(G(\theta)\) with a parametrised \(SU(2)\) rotation on the active \(\{|01\rangle, |10\rangle\}\) subspace introduces an additional phase \(\phi\):
\begin{equation}
   G(\theta, \phi) =
   \begin{pmatrix}
     1 & 0 & 0 & 0 \\
     0 & \cos\theta & -e^{i\phi}\sin\theta & 0 \\
     0 & e^{-i\phi}\sin\theta & \cos\theta & 0 \\
     0 & 0 & 0 & 1
   \end{pmatrix}.
\end{equation}
In exponential form,
\begin{equation}\label{eq:G_complex}
  G_{ij}(\theta, \phi) = \exp\Big[i\frac{\theta}{2}\big(\cos\phi\, L^a_{ij} - \sin\phi\, L^s_{ij}\big)\Big]\,,
\end{equation}
where \(L^a_{ij} \equiv L_{ij} = X_i Y_j - Y_i X_j\) is the real (antisymmetric) hopping generator defined earlier, and \(L^s_{ij} = X_i X_j + Y_i Y_j\) is the corresponding imaginary (symmetric) hopping generator. When \(\phi\) varies independently across different gates in the circuit, the continuous generators span both hopping directions.

The proof of Proposition~\ref{prop:universal_so_w} rests on the overlapping commutator \eqref{eq:L2L2-Z-dress}, whose \(Z\) spectator on the shared qubit enables the isolation of single-plane generators via dressing. We now show that the same spectator structure emerges for every combination of \(L^s\) and \(L^a\). Using \([X,Y] = 2iZ\) and the tensor product identity \([A_i B_j,\, C_j D_k] = A_i [B_j, C_j] D_k\), we compute
\begin{subequations}\label{eq:overlapping_commutators_complex}
  \begin{align}
    [L^a_{ik}, L^a_{kj}] &= -2i\lsp L^a_{ij}\lsp Z_k\,, \label{eq:overlapping_aa} \\
    [L^s_{ik}, L^s_{kj}] &= 2i\lsp L^a_{ij}\lsp Z_k\,, \label{eq:overlapping_ss} \\
    [L^a_{ik}, L^s_{kj}] &= -2i\lsp L^s_{ij}\lsp Z_k\,. \label{eq:overlapping_as}
  \end{align}
\end{subequations}
The first line is the commutator used in the proof of Proposition~\ref{prop:universal_so_w}; the remaining two extend it. Every commutator produces a Pauli \(Z\) on the shared qubit, regardless of whether the constituent generators are real or imaginary. Commuting two generators of the same type (\(aa\) or \(ss\)) yields a spectator-dressed real generator \(L^a Z\), while commuting generators of different types (\(sa\) or \(as\)) yields a spectator-dressed imaginary generator \(L^s Z\).

Because the spectator \(Z\) emerges uniformly in all cases, the Lie algebraic generation carries through unchanged with \(L^s\) in place of \(L^a\). Concretely, isolating single-plane rotations as in the \(\mathfrak{so}(w)\) case applied to \(L^a\) yields the single-plane antisymmetric generators \(E^a_{xy} = |x\rangle\langle y| - |y\rangle\langle x|\); the same operations applied to \(L^s\) isolate single-plane symmetric generators \(E^s_{xy} = i(|x\rangle\langle y| + |y\rangle\langle x|)\). The algebra \(\mathfrak{su}(w)\) requires three types of generators for each pair \((x,y)\): the antisymmetric \(E^a_{xy}\), the symmetric \(E^s_{xy}\), and a diagonal generator \(E^d_{xy}=i(|x\rangle\langle x| - |y\rangle\langle y|)\). The first two are thus provided structurally by \(L^a\) and \(L^s\) respectively, and the third follows from their commutator:
\begin{equation}
  [E^a_{xy}, E^s_{xy}] = 2\lsp E^d_{xy}\,.
\end{equation}
Therefore, the diagonal generators require no additional circuit elements. The set \(\{E^a_{xy},\, E^s_{xy},\, E^d_{xy}\}\) over all pairs \((x,y)\) therefore spans the full \(\mathfrak{su}(w)\) Lie algebra on the target subspace.

This result connects to the criteria of Yan et al.~\cite{Yan:2024izg}, whose Theorem 1 proves that hardware-efficient generators (without non-local Jordan--Wigner \(Z\)-strings) can span \(\mathfrak{su}(w)\) on the constant Hamming-weight subspace only if they break free-fermion integrability via a two-body diagonal \(Z_i Z_j\) component. At the level of a single link, the same-site commutator
\begin{equation}\label{eq:LxLy_commutator}
  [L^a_{ij}, L^s_{ij}] = 4i(Z_i - Z_j)
\end{equation}
is single-body and remains within the free-fermion algebra. This means that the complex phase alone does not break integrability. In our framework, it is the Pauli \(Z\) dressing that does so: dressing a hopping term with \(Z\)-string projectors produces effective generators containing multi-body diagonal components \(Z_i Z_j\) and higher, exactly the terms demanded by the Yan et al.\ condition. The independent phase \(\phi\) then supplies the second hopping direction within this already non-integrable structure.

The same logic applies to further symmetry-constrained sectors. For the \(S_z=0\) subspace of Section~\ref{sec:fuzzy}, recall that the 4-qubit gates \(G^{(4)}_{ijkl}(\theta)\) have a real generator \(L^a_{ijkl}\) driving \(|1100\rangle\langle0011| - |0011\rangle\langle1100|\). An independent phase \(\phi\) now adds the imaginary generator \(L^s_{ijkl}\). The overlapping commutators of these 4-qubit generators produce spectator \(Z\)-strings by exactly the same mechanism as the 2-qubit case, so the dressing procedure of Proposition~\ref{prop:symmetry_preserving} isolates both real and imaginary single-plane generators across the connected graph of the \(S_z=0\) subspace, completing the \(\mathfrak{su}(w_0)\) algebra. As a consistency check with the Yan et al.\ criterion, the same-site commutator \([L^s_{ijkl}, L^a_{ijkl}]\) is proportional to \(|1100\rangle\langle1100| - |0011\rangle\langle0011|\), which expands into single-body (\(Z_i\)) and three-body (\(Z_iZ_jZ_k\)) Pauli terms, the latter manifestly breaking free-fermion integrability.

\section{Conclusion}\label{sec:conclusion}
We have established a Lie algebraic framework for proving that hardware-efficient quantum gates, applied without non-local Jordan--Wigner strings, generate the full orthogonal group on constrained subspaces of fixed particle number and symmetry. Central to this is the mechanism of Pauli \(Z\) dressing: the emergence of Pauli \(Z\) operators through commutators of overlapping generators, which act as spectator projectors and allow multi-plane rotations to be decomposed into elementary single-plane generators \(E_{xy}\). We formalised this in three propositions: Proposition~\ref{prop:universal_so_w} for the constant Hamming weight subspace, Proposition~\ref{prop:bempa} for the BEMPA ansatz in bosonic systems with conserved particle number, and Proposition~\ref{prop:symmetry_preserving} for further symmetry-constrained sectors such as \(S_z = 0\). Because these propositions rely on continuously parametrised generators spanning the relevant Lie algebras, they guarantee exact reachability across the target connected Lie group. This distinguishes our completeness results from traditional quantum gate universality proofs, which rely on discrete gate sets and only guarantee asymptotic density. The local Lemma~\ref{lemma:jacobian_span} provides a computationally efficient sufficient criterion for establishing that a given minimal circuit can explore any direction on the target manifold from a generic parameter configuration, though it does not settle global reachability of every point in the manifold.

We applied this framework to two physical settings. For the BEMPA ansatz, Proposition~\ref{prop:bempa} establishes that the \(\hat{A}\) and \(\hat{B}\) gates generate the full algebra for bosonic systems with conserved particle number. For the fuzzy sphere regularisation of the 3D Ising CFT, we constructed explicit circuits with 19 parameters that span the full relevant subspace for \(N = 4\) electrons, with numerical evidence for global surjectivity, and demonstrated agreement of the extracted spectrum between ED and a VQE simulation.

We further showed in Section~\ref{sec:suw_extension} that the framework extends from \(SO(w)\) to the full special unitary group \(SU(w)\) once each gate carries an independent complex phase. The key observation is that the overlapping commutators producing spectator \(Z\) operators, the mechanism underlying all three propositions, emerge identically for the imaginary hopping generators \(L^s_{ij} = X_i X_j + Y_i Y_j\) as for the real generators \(L^a_{ij}=X_iY_j-Y_iX_j\). The Pauli \(Z\) dressing therefore isolates both real and imaginary single-plane generators for every pair of basis states, spanning the full \(\mathfrak{su}(w)\). This is consistent with the criteria of Yan et al.~\cite{Yan:2024izg}: the integrability-breaking diagonal components their theorem demands arise from the Pauli \(Z\) dressing itself, while the independent phases supply the additional hopping directions.

We suspect that Pauli \(Z\) dressing is, in fact, a universal structural feature of hardware-efficient ansatze rather than a property specific to the systems studied here. The mechanism relies on a single algebraic ingredient: commutators of overlapping multi-qubit generators produce \(Z\) operators on shared qubits, which then serve as spectator projectors. This ingredient is available whenever the gate set contains generators with overlapping support on at least one qubit, a condition satisfied by essentially any non-trivial hardware-efficient circuit topology. It would be interesting to determine whether there exist natural constrained subspaces for which the \(Z\) dressing mechanism fails to generate the full algebra, or whether completeness is instead guaranteed for any connected gate graph acting on a connected configuration space. Such a classification could provide a unifying criterion for the expressibility of hardware-efficient ansatze across many fields of study. For example, establishing exactly what set of independent complex phases enables \(\mathfrak{su}(w)\) completeness across different symmetry-constrained physical models, and classifying their minimal gate sets and circuit topologies, remains a valuable target for future work.

More broadly, the systematic construction of minimal circuits, i.e.\ those with the minimum number of independent parameters, that are locally surjective onto a given target manifold remains an open problem. While the Jacobian criterion of Lemma~\ref{lemma:jacobian_span} can verify specific ansatze locally, it offers no guidance on how to find one. A constructive procedure that, given a target subspace and a set of available gates, produces a minimal-depth spanning circuit, or establishes that no such circuit of a given topology exists, would significantly accelerate the deployment of variational algorithms on near-term hardware.

On the applications side, the fuzzy sphere construction is particularly promising for larger system sizes, where classical methods become impractical but the qubit count remains modest. At these sizes, noise-aware circuit optimisation and error mitigation strategies become essential, and the algebraic framework developed here could inform the design of circuits that are simultaneously minimal in depth and robust to hardware errors.

\ack{AS is grateful to Jake Belton and Nadav Drukker for useful discussions and collaboration in the initial stages of this project, and to Petr Kravchuk for insightful conversations. Claude Code and Gemini were both used in this work for coding tests and claim verification. They also aided in the writing of this paper. AS is supported by the Royal Society under grant URF{\textbackslash}R1{\textbackslash}211417 and by STFC under grant ST/X000753/1. NS acknowledges use of resources of the National Energy Research Scientific Computing Center, a DOE Office of Science User Facility supported by the Office of Science of the U.S.\ Department of Energy under Contract No.\ DE-AC02-05CH11231 using NERSC award NERSC DDR-ERCAP0038154.} 

\appendix

\section{Elementary gate decompositions}\label{app:decompositions}
In this appendix, we provide decompositions of the gates used in this work into elementary single-qubit rotations and CNOT gates. Throughout, \(R_k(\alpha) = \exp(-i\alpha\sigma_k/2)\) for \(k = x, y, z\), and \(H\) denotes the Hadamard gate. Note the half-angle convention: \(R_y(\alpha)|0\rangle = \cos(\alpha/2)|0\rangle + \sin(\alpha/2)|1\rangle\), so the argument of \(R_y\) must be \(2\theta\) to achieve a net rotation by \(\theta\) in a 2-dimensional subspace.

\paragraph{2-qubit Givens rotation \(G^{(2)}_{ij}(\theta)\).}
The Givens rotation acts as a rotation by \(\theta\) in the \(\{|01\rangle,|10\rangle\}\) subspace while leaving \(|00\rangle\) and \(|11\rangle\) invariant. This is essentially the \href{https://docs.pennylane.ai/en/stable/code/api/pennylane.SingleExcitation.html}{\texttt{SingleExcitation}} rotation of \href{https://pennylane.ai/}{\texttt{PennyLane}}. The Hadamard on \(q_i\) followed by a CNOT computes the parity of the two qubits into \(q_j\); the two \(R_y(-\theta)\) rotations then act with opposite effective signs in the even- and odd-parity sectors, producing a net rotation only in the active subspace; see Fig.\ \ref{fig:G2_decomp}.
\begin{figure}[H]
  \centering
  \begin{small}
  \begin{quantikz}[row sep=0.4cm, column sep=0.5cm]
    \lstick{\(q_i\)} & \gate{H} & \ctrl{1} & \gate{R_y(-\theta)} & \ctrl{1} & \gate{H} & \qw \\
    \lstick{\(q_j\)} & \qw      & \targ{}  & \gate{R_y(-\theta)} & \targ{}  & \qw      & \qw
  \end{quantikz}
  \end{small}
  \caption{Decomposition of the 2-qubit Givens rotation \(G^{(2)}_{ij}(\theta)\) into two CNOT gates and single-qubit rotations.}\label{fig:G2_decomp}
\end{figure}

\paragraph{2-qubit \(A\)-gate \(A^{(2)}_{ij}(\theta)\).}
Since \(A(\theta) = A(0)\, G(-\theta)\) and the discrete reflection matrix \(A(0) = \operatorname{diag}(1,1,-1,1)\) flips the sign of the determinant in the active subspace, it is impossible to decompose \(A^{(2)}_{ij}(\theta)\) using only two CNOTs and single-qubit \(SU(2)\) rotations. The reflection \(A(0)\) admits a single-CNOT implementation as \((I \otimes R_y(\pi/2))\, \text{CNOT}\, (I \otimes R_y(-\pi/2))\), which combined with \(G^{(2)}(-\theta)\) yields a 3-CNOT decomposition; see Fig.\ \ref{fig:A2_decomp}.
\begin{figure}[H]
  \centering
  \begin{small}
  \begin{quantikz}[row sep=0.4cm, column sep=0.45cm]
    \lstick{\(q_i\)} & \gate{H} & \ctrl{1} & \gate{R_y(\theta)} & \ctrl{1} & \gate{H} & \ctrl{1} & \qw & \qw \\
    \lstick{\(q_j\)} & \qw      & \targ{}  & \gate{R_y(\theta)} & \targ{}  & \gate{R_y(-\frac{\pi}{2})} & \targ{} & \gate{R_y(\frac{\pi}{2})} & \qw
  \end{quantikz}
  \end{small}
  \caption{Minimal exact decomposition of the 2-qubit \(A\)-gate \(A^{(2)}_{ij}(\theta)\) into three CNOT gates. The first two CNOTs implement \(G(-\theta)\) as in Fig.~\ref{fig:G2_decomp} (note the sign change from \(R_y(-\theta)\) to \(R_y(\theta)\), which reverses the rotation direction), and the third implements the reflection \(A(0)\).}\label{fig:A2_decomp}
\end{figure}

\paragraph{BEMPA \(\hat{B}\) gate.}
The 3-qubit \(\hat{B}\) gate rotates in the \(\{|001\rangle,|110\rangle\}\) subspace. Its decomposition follows the same compute-rotate-uncompute pattern as \(G^{(4)}\), adapted to three qubits; see Fig.\ \ref{fig:B_decomp}.
\begin{figure}[H]
  \centering
  \begin{quantikz}[row sep=0.3cm, column sep=0.4cm]
    \lstick{\(q_i\)} & \ctrl{1} & \ctrl{2} & \qw & \gate{R_y(2\alpha)} & \qw & \ctrl{2} & \ctrl{1} & \qw \\
    \lstick{\(q_j\)} & \targ{} & \qw & \gate{X} & \ctrl{-1} & \gate{X} & \qw & \targ{} & \qw \\
    \lstick{\(q_k\)} & \qw & \targ{} & \qw & \ctrl{-2} & \qw & \targ{} & \qw & \qw
  \end{quantikz}
  \caption{Decomposition of the BEMPA \(\hat{B}\) gate. The CNOT cascade controlled by \(q_i\) maps \(|001\rangle\) and \(|110\rangle\) to states differing only in \(q_i\). The \(X\) gate on \(q_j\) isolates these two states so that only they trigger the \(C^2 R_y(2\alpha)\) rotation on \(q_i\).}\label{fig:B_decomp}
\end{figure}

\paragraph{4-qubit Givens rotation \(G^{(4)}_{ijkl}(\theta)\).}
The 4-qubit gate rotates in the \(\{|1100\rangle,|0011\rangle\}\) subspace. It is essentially the \href{https://docs.pennylane.ai/en/stable/code/api/pennylane.DoubleExcitation.html}{\texttt{DoubleExcitation}} rotation of \href{https://pennylane.ai/}{\texttt{PennyLane}}. The strategy is to use a cascade of CNOT gates to map the Hamming distance of the active basis states from four down to one, apply a multi-controlled rotation, and uncompute; see Fig.\ \ref{fig:G4_decomp}.
\begin{figure}[H]
  \centering
  \begin{adjustbox}{max width=\textwidth}
    \begin{quantikz}[row sep=0.3cm, column sep=0.4cm]
      \lstick{\(q_i\)} & \qw & \qw & \targ{} & \qw & \ctrl{1} & \qw & \targ{} & \qw & \qw & \qw \\
      \lstick{\(q_j\)} & \qw & \targ{} & \qw & \qw & \ctrl{1} & \qw & \qw & \targ{} & \qw & \qw \\
      \lstick{\(q_k\)} & \targ{} & \qw & \qw & \gate{X} & \ctrl{1} & \gate{X} & \qw & \qw & \targ{} & \qw \\
      \lstick{\(q_l\)} & \ctrl{-1} & \ctrl{-2} & \ctrl{-3} & \qw & \gate{R_y(2\theta)} & \qw & \ctrl{-3} & \ctrl{-2} & \ctrl{-1} & \qw
    \end{quantikz}
  \end{adjustbox}
  \caption{Decomposition of the 4-qubit Givens rotation \(G^{(4)}_{ijkl}(\theta)\) acting on the \(\{|1100\rangle, |0011\rangle\}\) subspace. The CNOT cascade controlled by \(q_l\) maps \(|1100\rangle \mapsto |1110\rangle\) and \(|0011\rangle \mapsto |1111\rangle\), which differ only in \(q_l\). The \(X\) gate on \(q_k\) ensures these are the only states with \(q_i = q_j = q_k = 1\), so a single \(C^3 R_y(2\theta)\) implements the rotation. The cascade is then reversed to uncompute. The decomposition of the multi-controlled \(R_y\) into elementary gates is given in Fig.~\ref{fig:CmRy_decomp}.}\label{fig:G4_decomp}
\end{figure}

\paragraph{4-qubit \(A\)-gate \(A^{(4)}_{ijkl}(\theta)\).}
Since \(A^{(4)}(\theta) = A^{(4)}(0)\, G^{(4)}(-\theta)\) and \(A^{(4)}(0)\) applies a \(-1\) phase solely to \(|1100\rangle\), the circuit reuses the compute-rotate-uncompute structure of \(G^{(4)}\) but replaces the multi-controlled rotation. In the mapped basis, the two active states \(|1110\rangle\) (from \(|1100\rangle\)) and \(|1111\rangle\) (from \(|0011\rangle\)) are distinguished by \(q_l\). The combined rotation and reflection require the single-qubit gate \(R_y(2\theta)\, Z\) on \(q_l\) (which acts as \(\bigl[\begin{smallmatrix} \cos\theta & \sin\theta \\ \sin\theta & -\cos\theta \end{smallmatrix}\bigr]\)), with an additional \(C^3(-I)\) phase to account for the \(A^{(4)}(0)\) reflection; see Fig.\ \ref{fig:A4_decomp}.
\begin{figure}[H]
  \centering
  \begin{adjustbox}{max width=\textwidth}
    \begin{quantikz}[row sep=0.3cm, column sep=0.4cm]
      \lstick{\(q_i\)} & \qw & \qw & \targ{} & \qw & \ctrl{1} & \qw & \targ{} & \qw & \qw & \qw \\
      \lstick{\(q_j\)} & \qw & \targ{} & \qw & \qw & \ctrl{1} & \qw & \qw & \targ{} & \qw & \qw \\
      \lstick{\(q_k\)} & \targ{} & \qw & \qw & \gate{X} & \ctrl{1} & \gate{X} & \qw & \qw & \targ{} & \qw \\
      \lstick{\(q_l\)} & \ctrl{-1} & \ctrl{-2} & \ctrl{-3} & \qw & \gate{R_y(2\theta)Z} & \qw & \ctrl{-3} & \ctrl{-2} & \ctrl{-1} & \qw
    \end{quantikz}
  \end{adjustbox}
  \caption{Decomposition of the 4-qubit \(A\)-gate \(A^{(4)}_{ijkl}(\theta)\). The CNOT cascade is identical to Fig.~\ref{fig:G4_decomp}. The multi-controlled gate on \(q_l\) applies \(C^3(R_y(2\theta)\, Z)\) combined with a \(C^3(-I)\) phase correction to implement the reflection \(A^{(4)}(0)\) together with \(G^{(4)}(-\theta)\). In practice, the overall \(C^3(-I)\) can be absorbed into the decomposition of the multi-controlled gate.}\label{fig:A4_decomp}
\end{figure}

\paragraph{Decomposition of multi-controlled rotations.}
The multi-controlled \(C^m R_y(\alpha)\) gates appearing in the circuits above can be decomposed into \(2^m\) CNOT gates and \(2^m\) single-qubit \(R_y\) rotations using the uniformly controlled rotation technique~\cite{Mottonen:2004qra}. The construction interleaves CNOT gates---whose controls cycle through the \(m\) control qubits in Gray code order---with \(R_y(\pm\alpha/2^m)\) rotations of alternating sign on the target qubit; see Fig.\ \ref{fig:CmRy_decomp}. More generally, the same technique applies to any multi-controlled single-qubit gate \(C^m(U)\) with arbitrary \(U \in U(2)\), using \(2^m\) CNOTs with general \(e^{i\alpha}R_z \cdot R_y \cdot R_z\) blocks between them instead of bare \(R_y\) rotations~\cite{Mottonen:2004qra}. For \(A^{(4)}\), the combined gate \(C^3({-}R_y(2\theta)\,Z)\), which absorbs the \(C^3(-I)\) phase correction of Fig.~\ref{fig:A4_decomp}, has \(U = {-}R_y(2\theta)\,Z \notin SU(2)\) (since \(\det U = -1\)), but is decomposed with the same \(2^3 = 8\) CNOT count; the non-unit determinant is accommodated by a global phase factor \(e^{i\alpha}\) in the single-qubit blocks between the CNOTs.
\begin{figure}[H]
  \centering
  \begin{adjustbox}{max width=\textwidth}
  \begin{quantikz}[row sep=0.3cm, column sep=0.3cm]
    \lstick{\(c_1\)} & \qw & \qw & \qw & \qw & \qw & \qw & \ctrl{3} & \qw & \qw & \qw & \qw & \qw & \qw & \qw & \ctrl{3} & \qw & \qw \\
    \lstick{\(c_2\)} & \qw & \qw & \ctrl{2} & \qw & \qw & \qw & \qw & \qw & \qw & \qw & \ctrl{2} & \qw & \qw & \qw & \qw & \qw & \qw \\
    \lstick{\(c_3\)} & \ctrl{1} & \qw & \qw & \qw & \ctrl{1} & \qw & \qw & \qw & \ctrl{1} & \qw & \qw & \qw & \ctrl{1} & \qw & \qw & \qw & \qw \\
    \lstick{\(t\)} & \targ{} & \gate{R_y^-} & \targ{} & \gate{R_y^+} & \targ{} & \gate{R_y^-} & \targ{} & \gate{R_y^+} & \targ{} & \gate{R_y^-} & \targ{} & \gate{R_y^+} & \targ{} & \gate{R_y^-} & \targ{} & \gate{R_y^+} & \qw
  \end{quantikz}
  \end{adjustbox}
  \caption{Decomposition of \(C^3 R_y(\alpha)\) into 8 CNOT gates and 8 single-qubit rotations \(R_y^\pm \equiv R_y(\pm\alpha/8)\) of alternating sign. The CNOT controls follow the Gray code pattern \(c_3, c_2, c_3, c_1, c_3, c_2, c_3, c_1\), where \(c_3\) toggles most frequently. The same pattern with two controls and 4 CNOTs gives \(C^2 R_y(\alpha)\) with \(R_y(\pm\alpha/4)\).}\label{fig:CmRy_decomp}
\end{figure}

\paragraph{Summary of circuit resources.}
The \emph{circuit depth} is the number of sequential time steps when gates acting on disjoint qubits are executed in parallel. Table~\ref{tab:circuit_resources} summarises the total resource counts after decomposing all multi-controlled gates as in Fig.~\ref{fig:CmRy_decomp}.

\begin{table}[H]
  \centering
  \begin{tabular}{lcccc}
    \hline
    Gate & Qubits & CNOTs & Depth \\
    \hline
    \(G^{(2)}_{ij}(\theta)\)   & 2 &  2 &  5 \\
    \(A^{(2)}_{ij}(\theta)\)   & 2 &  3 &  7 \\
    \(\hat{B}(\alpha)\)        & 3 &  8 & 12 \\
    \(G^{(4)}_{ijkl}(\theta)\) & 4 & 14 & 22 \\
    \(A^{(4)}_{ijkl}(\theta)\) & 4 & 14 & 22 \\
    \hline
  \end{tabular}
  \caption{Total circuit resources for each gate decomposition after expanding multi-controlled rotations into elementary gates. The CNOT counts include both the CNOT cascade and the decomposed \(C^m R_y\) gate (\(2^m\) CNOTs). The depth accounts for parallelisation of gates on disjoint qubits.}\label{tab:circuit_resources}
\end{table}

\bibliography{lie_qc}

%bibliography generated by chetref.bst
\ifx\mcitethebibliography\mciteundefinedmacro
  \let\mcitethebibliography\thebibliography
  \expandafter\let\csname endmcitethebibliography\endcsname\endthebibliography
  \def\mcitedefaultmidpunct{,~}
  \def\mcitedefaultendpunct{.}
  \def\mcitedefaultseppunct{;}
  \def\EndOfBibitem{}
  \def\mciteBstWouldAddEndPuncttrue{}
  \def\mciteBstWouldAddEndPunctfalse{}
  \def\mciteSetBstMidEndSepPunct#1#2#3{#2}
\fi
\begin{mcitethebibliography}{10}
\ifx\href\asklfhas\newcommand{\href}[2]{#2}\fi
\ifx\arxivref\asklfhas\newcommand{\arxivref}[2]{\href{https://arxiv.org/abs/#1}{#2}}\fi
\ifx\doiref\asklfhas\newcommand{\doiref}[2]{\href{https://doi.org/#1}{#2}}\fi
\parskip 0pt
\normalsize

\bibitem{Peruzzo:2013bzg}
A.~Peruzzo et~al.,
\textit{``{A variational eigenvalue solver on a photonic quantum processor}''},
\doiref{10.1038/ncomms5213}{Nature Commun. \textbf{5}, 4213
  (2014)\ignorespaces}\ignorespaces,
\texttt{\arxivref{1304.3061}{arXiv:1304.3061
  \![quant-ph]}}\ignorespaces\mciteBstWouldAddEndPuncttrue
\mciteSetBstMidEndSepPunct{\mcitedefaultmidpunct\newline}
{\mcitedefaultendpunct}{\mcitedefaultseppunct}\relax
\EndOfBibitem\bibitem{Tilly:2021jem}
J.~Tilly et~al.,
\textit{``{The Variational Quantum Eigensolver: A review of methods and best
  practices}''},
\doiref{10.1016/j.physrep.2022.08.003}{Phys. Rept. \textbf{986}, 1
  (2022)\ignorespaces}\ignorespaces,
\texttt{\arxivref{2111.05176}{arXiv:2111.05176
  \![quant-ph]}}\ignorespaces\mciteBstWouldAddEndPuncttrue
\mciteSetBstMidEndSepPunct{\mcitedefaultmidpunct\newline}
{\mcitedefaultendpunct}{\mcitedefaultseppunct}\relax
\EndOfBibitem\bibitem{Kitaev:1997qca}
A.~Y. Kitaev,
\textit{``Quantum computations: algorithms and error correction''},
\doiref{10.1070/RM1997v052n06ABEH002155}{Russ. Math. Surv. \textbf{52}, 1191
  (1997)\ignorespaces}\ignorespaces\mciteBstWouldAddEndPuncttrue
\mciteSetBstMidEndSepPunct{\mcitedefaultmidpunct\newline}
{\mcitedefaultendpunct}{\mcitedefaultseppunct}\relax
\EndOfBibitem\bibitem{Dawson:2005blj}
C.~M. Dawson \& M.~A. Nielsen,
\textit{``{The Solovay--Kitaev algorithm}''},
\doiref{10.26421/QIC6.1-6}{Quant. Inf. Comput. \textbf{6}, 081
  (2006)\ignorespaces}\ignorespaces,
\texttt{\arxivref{quant-ph/0505030}{quant-ph/0505030}}\ignorespaces\mciteBstWouldAddEndPuncttrue
\mciteSetBstMidEndSepPunct{\mcitedefaultmidpunct\newline}
{\mcitedefaultendpunct}{\mcitedefaultseppunct}\relax
\EndOfBibitem\bibitem{Jurdjevic:1972csl}
V.~Jurdjevic \& H.~J. Sussmann,
\textit{``Control systems on Lie groups''},
\doiref{10.1016/0022-0396(72)90035-6}{J. Differ. Equ. \textbf{12}, 313
  (1972)\ignorespaces}\ignorespaces\mciteBstWouldAddEndPuncttrue
\mciteSetBstMidEndSepPunct{\mcitedefaultmidpunct\newline}
{\mcitedefaultendpunct}{\mcitedefaultseppunct}\relax
\EndOfBibitem\bibitem{DiVincenzo:1995zz}
D.~P. DiVincenzo,
\textit{``{Two-bit gates are universal for quantum computation}''},
\doiref{10.1103/PhysRevA.51.1015}{Phys. Rev. A \textbf{51}, 1015
  (1995)\ignorespaces}\ignorespaces,
\texttt{\arxivref{cond-mat/9407022}{cond-mat/9407022}}\ignorespaces\mciteBstWouldAddEndPuncttrue
\mciteSetBstMidEndSepPunct{\mcitedefaultmidpunct\newline}
{\mcitedefaultendpunct}{\mcitedefaultseppunct}\relax
\EndOfBibitem\bibitem{Lloyd:1995aql}
S.~Lloyd,
\textit{``Almost any quantum logic gate is universal''},
\doiref{10.1103/PhysRevLett.75.346}{Phys. Rev. Lett. \textbf{75}, 346
  (1995)\ignorespaces}\ignorespaces\mciteBstWouldAddEndPuncttrue
\mciteSetBstMidEndSepPunct{\mcitedefaultmidpunct\newline}
{\mcitedefaultendpunct}{\mcitedefaultseppunct}\relax
\EndOfBibitem\bibitem{Barenco:1995dx}
A.~Barenco,
\textit{``{A Universal two bit gate for quantum computation}''},
\doiref{10.1098/rspa.1995.0066}{Proc. Roy. Soc. Lond. A \textbf{449}, 679
  (1995)\ignorespaces}\ignorespaces,
\texttt{\arxivref{quant-ph/9505016}{quant-ph/9505016}}\ignorespaces\mciteBstWouldAddEndPuncttrue
\mciteSetBstMidEndSepPunct{\mcitedefaultmidpunct\newline}
{\mcitedefaultendpunct}{\mcitedefaultseppunct}\relax
\EndOfBibitem\bibitem{Deutsch:1995dw}
D.~Deutsch, A.~Barenco \& A.~Ekert,
\textit{``{Universality in quantum computation}''},
\doiref{10.1098/rspa.1995.0065}{Proc. Roy. Soc. Lond. A \textbf{449}, 669
  (1995)\ignorespaces}\ignorespaces,
\texttt{\arxivref{quant-ph/9505018}{quant-ph/9505018}}\ignorespaces\mciteBstWouldAddEndPuncttrue
\mciteSetBstMidEndSepPunct{\mcitedefaultmidpunct\newline}
{\mcitedefaultendpunct}{\mcitedefaultseppunct}\relax
\EndOfBibitem\bibitem{Wecker:2015fib}
D.~Wecker, M.~B. Hastings, N.~Wiebe, B.~K. Clark, C.~Nayak \& M.~Troyer,
\textit{``{Solving strongly correlated electron models on a quantum
  computer}''},
\doiref{10.1103/PhysRevA.92.062318}{Phys. Rev. A \textbf{92}, 062318
  (2015)\ignorespaces}\ignorespaces,
\texttt{\arxivref{1506.05135}{arXiv:1506.05135
  \![quant-ph]}}\ignorespaces\mciteBstWouldAddEndPuncttrue
\mciteSetBstMidEndSepPunct{\mcitedefaultmidpunct\newline}
{\mcitedefaultendpunct}{\mcitedefaultseppunct}\relax
\EndOfBibitem\bibitem{Jiang:2017pyp}
Z.~Jiang, K.~J. Sung, K.~Kechedzhi, V.~N. Smelyanskiy \& S.~Boixo,
\textit{``{Quantum algorithms to simulate many-body physics of correlated
  fermions}''},
\doiref{10.1103/PhysRevApplied.9.044036}{Phys. Rev. Applied \textbf{9}, 044036
  (2018)\ignorespaces}\ignorespaces,
\texttt{\arxivref{1711.05395}{arXiv:1711.05395
  \![quant-ph]}}\ignorespaces\mciteBstWouldAddEndPuncttrue
\mciteSetBstMidEndSepPunct{\mcitedefaultmidpunct\newline}
{\mcitedefaultendpunct}{\mcitedefaultseppunct}\relax
\EndOfBibitem\bibitem{Cade:2020owo}
C.~Cade, L.~Mineh, A.~Montanaro \& S.~Stanisic,
\textit{``{Strategies for solving the Fermi-Hubbard model on near-term quantum
  computers}''},
\doiref{10.1103/PhysRevB.102.235122}{Phys. Rev. B \textbf{102}, 235122
  (2020)\ignorespaces}\ignorespaces,
\texttt{\arxivref{1912.06007}{arXiv:1912.06007
  \![quant-ph]}}\ignorespaces\mciteBstWouldAddEndPuncttrue
\mciteSetBstMidEndSepPunct{\mcitedefaultmidpunct\newline}
{\mcitedefaultendpunct}{\mcitedefaultseppunct}\relax
\EndOfBibitem\bibitem{Bahrami:2024dzf}
S.~Bahrami \& N.~Sawaya,
\textit{``{Particle-conserving quantum circuit ansatz with applications in
  variational simulation of bosonic systems}''},
\texttt{\arxivref{2402.18768}{arXiv:2402.18768
  \![quant-ph]}}\ignorespaces\mciteBstWouldAddEndPuncttrue
\mciteSetBstMidEndSepPunct{\mcitedefaultmidpunct\newline}
{\mcitedefaultendpunct}{\mcitedefaultseppunct}\relax
\EndOfBibitem\bibitem{Zhu:2022gjc}
W.~Zhu, C.~Han, E.~Huffman, J.~S. Hofmann \& Y.-C. He,
\textit{``{Uncovering Conformal Symmetry in the 3D Ising Transition:
  State-Operator Correspondence from a Quantum Fuzzy Sphere Regularization}''},
\doiref{10.1103/PhysRevX.13.021009}{Phys. Rev. X \textbf{13}, 021009
  (2023)\ignorespaces}\ignorespaces,
\texttt{\arxivref{2210.13482}{arXiv:2210.13482
  \![cond-mat.stat-mech]}}\ignorespaces\mciteBstWouldAddEndPuncttrue
\mciteSetBstMidEndSepPunct{\mcitedefaultmidpunct\newline}
{\mcitedefaultendpunct}{\mcitedefaultseppunct}\relax
\EndOfBibitem\bibitem{Higgott:2018dkk}
O.~Higgott, D.~Wang \& S.~Brierley,
\textit{``Variational {Q}uantum {C}omputation of {E}xcited {S}tates''},
\doiref{10.22331/q-2019-07-01-156}{{Quantum} \textbf{3}, 156
  (2019)\ignorespaces}\ignorespaces\mciteBstWouldAddEndPuncttrue
\mciteSetBstMidEndSepPunct{\mcitedefaultmidpunct\newline}
{\mcitedefaultendpunct}{\mcitedefaultseppunct}\relax
\EndOfBibitem\bibitem{Klymko:2021brd}
K.~Klymko et~al.,
\textit{``Real-Time Evolution for Ultracompact Hamiltonian Eigenstates on
  Quantum Hardware''},
\doiref{10.1103/PRXQuantum.3.020323}{PRX Quantum \textbf{3}, 020323
  (2022)\ignorespaces}\ignorespaces,
\texttt{\arxivref{2103.08563}{arXiv:2103.08563
  \![quant-ph]}}\ignorespaces\mciteBstWouldAddEndPuncttrue
\mciteSetBstMidEndSepPunct{\mcitedefaultmidpunct\newline}
{\mcitedefaultendpunct}{\mcitedefaultseppunct}\relax
\EndOfBibitem\bibitem{Poland:2018epd}
D.~Poland, S.~Rychkov \& A.~Vichi,
\textit{``{The Conformal Bootstrap: Theory, Numerical Techniques, and
  Applications}''},
\doiref{10.1103/RevModPhys.91.015002}{Rev. Mod. Phys. \textbf{91}, 015002
  (2019)\ignorespaces}\ignorespaces,
\texttt{\arxivref{1805.04405}{arXiv:1805.04405
  \![hep-th]}}\ignorespaces\mciteBstWouldAddEndPuncttrue
\mciteSetBstMidEndSepPunct{\mcitedefaultmidpunct\newline}
{\mcitedefaultendpunct}{\mcitedefaultseppunct}\relax
\EndOfBibitem\bibitem{Arrazola:2021wuo}
J.~M. Arrazola, O.~Di~Matteo, N.~Quesada, S.~Jahangiri, A.~Delgado \&
  N.~Killoran,
\textit{``{Universal quantum circuits for quantum chemistry}''},
\doiref{10.22331/q-2022-06-20-742}{Quantum \textbf{6}, 742
  (2022)\ignorespaces}\ignorespaces,
\texttt{\arxivref{2106.13839}{arXiv:2106.13839
  \![quant-ph]}}\ignorespaces\mciteBstWouldAddEndPuncttrue
\mciteSetBstMidEndSepPunct{\mcitedefaultmidpunct\newline}
{\mcitedefaultendpunct}{\mcitedefaultseppunct}\relax
\EndOfBibitem\bibitem{Yamabe:1950}
H.~Yamabe,
\textit{``On an arcwise connected subgroup of a Lie group''},
Osaka Mathematical Journal \textbf{2}, 13
  (1950)\ignorespaces\ignorespaces\mciteBstWouldAddEndPuncttrue
\mciteSetBstMidEndSepPunct{\mcitedefaultmidpunct\newline}
{\mcitedefaultendpunct}{\mcitedefaultseppunct}\relax
\EndOfBibitem\bibitem{Barkoutsos:2018igw}
P.~K. Barkoutsos et~al.,
\textit{``{Quantum algorithms for electronic structure calculations:
  Particle-hole Hamiltonian and optimized wave-function expansions}''},
\doiref{10.1103/PhysRevA.98.022322}{Phys. Rev. A \textbf{98}, 022322
  (2018)\ignorespaces}\ignorespaces,
\texttt{\arxivref{1805.04340}{arXiv:1805.04340
  \![quant-ph]}}\ignorespaces\mciteBstWouldAddEndPuncttrue
\mciteSetBstMidEndSepPunct{\mcitedefaultmidpunct\newline}
{\mcitedefaultendpunct}{\mcitedefaultseppunct}\relax
\EndOfBibitem\bibitem{Gard:2020pwo}
B.~T. Gard, L.~Zhu, G.~S. Barron, N.~J. Mayhall, S.~E. Economou \& E.~Barnes,
\textit{``{Efficient symmetry-preserving state preparation circuits for the
  variational quantum eigensolver algorithm}''},
\doiref{10.1038/s41534-019-0240-1}{npj Quantum Inf. \textbf{6}, 10
  (2020)\ignorespaces}\ignorespaces\mciteBstWouldAddEndPuncttrue
\mciteSetBstMidEndSepPunct{\mcitedefaultmidpunct\newline}
{\mcitedefaultendpunct}{\mcitedefaultseppunct}\relax
\EndOfBibitem\bibitem{Yan:2024izg}
G.~Yan, K.~Pan, R.~Wang, M.~Ran, H.~Chen, X.~Wang \& J.~Yan,
\textit{``{Universal Hamming Weight Preserving Variational Quantum Ansatz}''},
\texttt{\arxivref{2412.04825}{arXiv:2412.04825
  \![quant-ph]}}\ignorespaces\mciteBstWouldAddEndPuncttrue
\mciteSetBstMidEndSepPunct{\mcitedefaultmidpunct\newline}
{\mcitedefaultendpunct}{\mcitedefaultseppunct}\relax
\EndOfBibitem\bibitem{Kempe:2006lhp}
J.~Kempe, A.~Kitaev \& O.~Regev,
\textit{``{The Complexity of the Local Hamiltonian Problem}''},
\doiref{10.1137/S0097539704445226}{SIAM J. Comput. \textbf{35}, 1070
  (2006)\ignorespaces}\ignorespaces,
\texttt{\arxivref{quant-ph/0406180}{quant-ph/0406180}}\ignorespaces\mciteBstWouldAddEndPuncttrue
\mciteSetBstMidEndSepPunct{\mcitedefaultmidpunct\newline}
{\mcitedefaultendpunct}{\mcitedefaultseppunct}\relax
\EndOfBibitem\bibitem{Somma:2003qra}
R.~Somma, G.~Ortiz, E.~Knill \& J.~Gubernatis,
\textit{``{Quantum Simulations of Physics Problems}''},
\doiref{10.1142/s0219749903000310}{Int. J. Quant. Inf. \textbf{01}, 417
  (2003)\ignorespaces}\ignorespaces,
\texttt{\arxivref{quant-ph/0304063}{quant-ph/0304063}}\ignorespaces\mciteBstWouldAddEndPuncttrue
\mciteSetBstMidEndSepPunct{\mcitedefaultmidpunct\newline}
{\mcitedefaultendpunct}{\mcitedefaultseppunct}\relax
\EndOfBibitem\bibitem{McArdle:2019dqs}
S.~McArdle, A.~Mayorov, X.~Shan, S.~Benjamin \& X.~Yuan,
\textit{``{Digital quantum simulation of molecular vibrations}''},
\doiref{10.1039/c9sc01313j}{Chem. Sci. \textbf{10}, 5725
  (2019)\ignorespaces}\ignorespaces,
\texttt{\arxivref{1811.04069}{arXiv:1811.04069
  \![quant-ph]}}\ignorespaces\mciteBstWouldAddEndPuncttrue
\mciteSetBstMidEndSepPunct{\mcitedefaultmidpunct\newline}
{\mcitedefaultendpunct}{\mcitedefaultseppunct}\relax
\EndOfBibitem\bibitem{Ollitrault:2020heq}
P.~J. Ollitrault, A.~Baiardi, M.~Reiher \& I.~Tavernelli,
\textit{``{Hardware efficient quantum algorithms for vibrational structure
  calculations}''},
\doiref{10.1039/d0sc01908a}{Chem. Sci. \textbf{11}, 6842
  (2020)\ignorespaces}\ignorespaces,
\texttt{\arxivref{2003.12578}{arXiv:2003.12578
  \![quant-ph]}}\ignorespaces\mciteBstWouldAddEndPuncttrue
\mciteSetBstMidEndSepPunct{\mcitedefaultmidpunct\newline}
{\mcitedefaultendpunct}{\mcitedefaultseppunct}\relax
\EndOfBibitem\bibitem{Sawaya:2020qce}
N.~P.~D. Sawaya, G.~G. Guerreschi \& A.~Holmes,
\textit{``{On connectivity-dependent resource requirements for digital quantum
  simulation of d-level particles}''},
in \textit{``{2020 IEEE International Conference on Quantum Computing and
  Engineering}''},
\texttt{\arxivref{2005.13070}{arXiv:2005.13070
  \![quant-ph]}}\ignorespaces\mciteBstWouldAddEndPuncttrue
\mciteSetBstMidEndSepPunct{\mcitedefaultmidpunct\newline}
{\mcitedefaultendpunct}{\mcitedefaultseppunct}\relax
\EndOfBibitem\bibitem{Jnane:2021anc}
H.~Jnane, N.~P.~D. Sawaya, B.~Peropadre, A.~Aspuru-Guzik, R.~Garcia-Patron \&
  J.~Huh,
\textit{``{Analog quantum simulation of non-Condon effects in molecular
  spectroscopy}''},
\doiref{10.1021/acsphotonics.1c00059}{ACS Photonics \textbf{8}, 7
  (2021)\ignorespaces}\ignorespaces,
\texttt{\arxivref{2011.05553}{arXiv:2011.05553
  \![quant-ph]}}\ignorespaces\mciteBstWouldAddEndPuncttrue
\mciteSetBstMidEndSepPunct{\mcitedefaultmidpunct\newline}
{\mcitedefaultendpunct}{\mcitedefaultseppunct}\relax
\EndOfBibitem\bibitem{Schmitz:2024gop}
A.~T. Schmitz, N.~P.~D. Sawaya, S.~Johri \& A.~Y. Matsuura,
\textit{``{Graph optimization perspective for low-depth Trotter-Suzuki
  decomposition}''},
\doiref{10.1103/PhysRevA.109.042418}{Phys. Rev. A \textbf{109}, 042418
  (2024)\ignorespaces}\ignorespaces,
\texttt{\arxivref{2103.08602}{arXiv:2103.08602
  \![quant-ph]}}\ignorespaces\mciteBstWouldAddEndPuncttrue
\mciteSetBstMidEndSepPunct{\mcitedefaultmidpunct\newline}
{\mcitedefaultendpunct}{\mcitedefaultseppunct}\relax
\EndOfBibitem\bibitem{Trenev:2025rvm}
D.~Trenev, P.~J. Ollitrault, S.~M. Harwood, T.~P. Gujarati, S.~Raman,
  A.~Mezzacapo \& S.~Mostame,
\textit{``{Refining resource estimation for the quantum computation of
  vibrational molecular spectra through Trotter error analysis}''},
\doiref{10.22331/q-2025-02-11-1630}{Quantum \textbf{9}, 1630
  (2025)\ignorespaces}\ignorespaces,
\texttt{\arxivref{2311.03719}{arXiv:2311.03719
  \![quant-ph]}}\ignorespaces\mciteBstWouldAddEndPuncttrue
\mciteSetBstMidEndSepPunct{\mcitedefaultmidpunct\newline}
{\mcitedefaultendpunct}{\mcitedefaultseppunct}\relax
\EndOfBibitem\bibitem{Haldane:1983xm}
F.~D.~M. Haldane,
\textit{``{Fractional quantization of the Hall effect: a hierarchy of
  incompressible quantum fluid states}''},
\doiref{10.1103/PhysRevLett.51.605}{Phys. Rev. Lett. \textbf{51}, 605
  (1983)\ignorespaces}\ignorespaces\mciteBstWouldAddEndPuncttrue
\mciteSetBstMidEndSepPunct{\mcitedefaultmidpunct\newline}
{\mcitedefaultendpunct}{\mcitedefaultseppunct}\relax
\EndOfBibitem\bibitem{Wu:1976ge}
T.~T. Wu \& C.~N. Yang,
\textit{``Dirac monopole without strings: Monopole harmonics''},
\doiref{https://doi.org/10.1016/0550-3213(76)90143-7}{Nuclear Physics B
  \textbf{107}, 365
  (1976)\ignorespaces}\ignorespaces\mciteBstWouldAddEndPuncttrue
\mciteSetBstMidEndSepPunct{\mcitedefaultmidpunct\newline}
{\mcitedefaultendpunct}{\mcitedefaultseppunct}\relax
\EndOfBibitem\bibitem{Newman:1966ub}
E.~T. Newman \& R.~Penrose,
\textit{``{Note on the Bondi-Metzner-Sachs group}''},
\doiref{10.1063/1.1931221}{J. Math. Phys. \textbf{7}, 863
  (1966)\ignorespaces}\ignorespaces\mciteBstWouldAddEndPuncttrue
\mciteSetBstMidEndSepPunct{\mcitedefaultmidpunct\newline}
{\mcitedefaultendpunct}{\mcitedefaultseppunct}\relax
\EndOfBibitem\bibitem{Dray:1985mnh}
T.~Dray,
\textit{``{The relationship between monopole harmonics and spin‐weighted
  spherical harmonics}''},
\doiref{10.1063/1.526533}{Journal of Mathematical Physics \textbf{26}, 1030
  (1985)\ignorespaces}\ignorespaces\mciteBstWouldAddEndPuncttrue
\mciteSetBstMidEndSepPunct{\mcitedefaultmidpunct\newline}
{\mcitedefaultendpunct}{\mcitedefaultseppunct}\relax
\EndOfBibitem\bibitem{Madore:1991bw}
J.~Madore,
\textit{``{The Fuzzy sphere}''},
\doiref{10.1088/0264-9381/9/1/008}{Class. Quant. Grav. \textbf{9}, 69
  (1992)\ignorespaces}\ignorespaces\mciteBstWouldAddEndPuncttrue
\mciteSetBstMidEndSepPunct{\mcitedefaultmidpunct\newline}
{\mcitedefaultendpunct}{\mcitedefaultseppunct}\relax
\EndOfBibitem\bibitem{Hasebe:2010vp}
K.~Hasebe,
\textit{``{Hopf Maps, Lowest Landau Level, and Fuzzy Spheres}''},
\doiref{10.3842/SIGMA.2010.071}{SIGMA \textbf{6}, 071
  (2010)\ignorespaces}\ignorespaces,
\texttt{\arxivref{1009.1192}{arXiv:1009.1192
  \![hep-th]}}\ignorespaces\mciteBstWouldAddEndPuncttrue
\mciteSetBstMidEndSepPunct{\mcitedefaultmidpunct\newline}
{\mcitedefaultendpunct}{\mcitedefaultseppunct}\relax
\EndOfBibitem\bibitem{DiFrancesco:1997nk}
P.~Di~Francesco, P.~Mathieu \& D.~Senechal,
\textit{``{Conformal Field Theory}''},
Springer-Verlag (1997)\ignorespaces,
New York\mciteBstWouldAddEndPuncttrue
\mciteSetBstMidEndSepPunct{\mcitedefaultmidpunct\newline}
{\mcitedefaultendpunct}{\mcitedefaultseppunct}\relax
\EndOfBibitem\bibitem{Han:2023yyb}
C.~Han, L.~Hu, W.~Zhu \& Y.-C. He,
\textit{``{Conformal four-point correlators of the 3D Ising transition via the
  quantum fuzzy sphere}''},
\texttt{\arxivref{2306.04681}{arXiv:2306.04681
  \![cond-mat.stat-mech]}}\ignorespaces\mciteBstWouldAddEndPuncttrue
\mciteSetBstMidEndSepPunct{\mcitedefaultmidpunct\newline}
{\mcitedefaultendpunct}{\mcitedefaultseppunct}\relax
\EndOfBibitem\bibitem{McClean:2019kvs}
J.~R. McClean et~al.,
\textit{``OpenFermion: the electronic structure package for quantum
  computers''},
\doiref{10.1088/2058-9565/ab8ebc}{Quantum Science and Technology \textbf{5},
  034014 (2020)\ignorespaces}\ignorespaces\mciteBstWouldAddEndPuncttrue
\mciteSetBstMidEndSepPunct{\mcitedefaultmidpunct\newline}
{\mcitedefaultendpunct}{\mcitedefaultseppunct}\relax
\EndOfBibitem\bibitem{oeis}
{OEIS Foundation Inc.},
\textit{``A125809 entry in the {O}n-{L}ine {E}ncyclopedia of {I}nteger
  {S}equences''},
Published electronically at
  \url{http://oeis.org/A125809}\ignorespaces\mciteBstWouldAddEndPuncttrue
\mciteSetBstMidEndSepPunct{\mcitedefaultmidpunct\newline}
{\mcitedefaultendpunct}{\mcitedefaultseppunct}\relax
\EndOfBibitem\bibitem{Simmons-Duffin:2016wlq}
D.~Simmons-Duffin,
\textit{``{The Lightcone Bootstrap and the Spectrum of the 3d Ising CFT}''},
\doiref{10.1007/JHEP03(2017)086}{JHEP \textbf{1703}, 086
  (2017)\ignorespaces}\ignorespaces,
\texttt{\arxivref{1612.08471}{arXiv:1612.08471
  \![hep-th]}}\ignorespaces\mciteBstWouldAddEndPuncttrue
\mciteSetBstMidEndSepPunct{\mcitedefaultmidpunct\newline}
{\mcitedefaultendpunct}{\mcitedefaultseppunct}\relax
\EndOfBibitem\bibitem{Han:2023lky}
C.~Han, L.~Hu \& W.~Zhu,
\textit{``{Conformal Operator Content of the Wilson-Fisher Transition on Fuzzy
  Sphere Bilayers}''},
\texttt{\arxivref{2312.04047}{arXiv:2312.04047
  \![cond-mat.str-el]}}\ignorespaces\mciteBstWouldAddEndPuncttrue
\mciteSetBstMidEndSepPunct{\mcitedefaultmidpunct\newline}
{\mcitedefaultendpunct}{\mcitedefaultseppunct}\relax
\EndOfBibitem\bibitem{Dey:2025zgn}
A.~Dey, L.~Herviou, C.~Mudry \& A.~M. L{\"a}uchli,
\textit{``{Conformal Data for the O(3) Wilson-Fisher CFT from Fuzzy Sphere
  Realization of Quantum Rotor Model}''},
\texttt{\arxivref{2510.09755}{arXiv:2510.09755
  \![cond-mat.str-el]}}\ignorespaces\mciteBstWouldAddEndPuncttrue
\mciteSetBstMidEndSepPunct{\mcitedefaultmidpunct\newline}
{\mcitedefaultendpunct}{\mcitedefaultseppunct}\relax
\EndOfBibitem\bibitem{Guo:2025odn}
W.~Guo, Z.~Zhou, T.-C. Wei \& Y.-C. He,
\textit{``{The $O(N)$ Free-Scalar and Wilson-Fisher Conformal Field Theories on
  the Fuzzy Sphere}''},
\texttt{\arxivref{2512.02234}{arXiv:2512.02234
  \![cond-mat.str-el]}}\ignorespaces\mciteBstWouldAddEndPuncttrue
\mciteSetBstMidEndSepPunct{\mcitedefaultmidpunct\newline}
{\mcitedefaultendpunct}{\mcitedefaultseppunct}\relax
\EndOfBibitem\bibitem{Zhou:2023qfi}
Z.~Zhou, L.~Hu, W.~Zhu \& Y.-C. He,
\textit{``{SO(5) Deconfined Phase Transition under the Fuzzy-Sphere Microscope:
  Approximate Conformal Symmetry, Pseudo-Criticality, and Operator
  Spectrum}''},
\doiref{10.1103/PhysRevX.14.021044}{Phys. Rev. X \textbf{14}, 021044
  (2024)\ignorespaces}\ignorespaces,
\texttt{\arxivref{2306.16435}{arXiv:2306.16435
  \![cond-mat.str-el]}}\ignorespaces\mciteBstWouldAddEndPuncttrue
\mciteSetBstMidEndSepPunct{\mcitedefaultmidpunct\newline}
{\mcitedefaultendpunct}{\mcitedefaultseppunct}\relax
\EndOfBibitem\bibitem{Fan:2025bhc}
R.~Fan, J.~Dong \& A.~Vishwanath,
\textit{``{Simulating the non-unitary Yang-Lee conformal field theory on the
  fuzzy sphere}''},
\texttt{\arxivref{2505.06342}{arXiv:2505.06342
  \![cond-mat.str-el]}}\ignorespaces\mciteBstWouldAddEndPuncttrue
\mciteSetBstMidEndSepPunct{\mcitedefaultmidpunct\newline}
{\mcitedefaultendpunct}{\mcitedefaultseppunct}\relax
\EndOfBibitem\bibitem{ArguelloCruz:2025zuq}
E.~Arguello~Cruz, I.~R. Klebanov, G.~Tarnopolsky \& Y.~Xin,
\textit{``{Yang-Lee Quantum Criticality in Various Dimensions}''},
\doiref{10.1103/w4qg-2xwn}{Phys. Rev. X \textbf{16}, 011022
  (2026)\ignorespaces}\ignorespaces,
\texttt{\arxivref{2505.06369}{arXiv:2505.06369
  \![hep-th]}}\ignorespaces\mciteBstWouldAddEndPuncttrue
\mciteSetBstMidEndSepPunct{\mcitedefaultmidpunct\newline}
{\mcitedefaultendpunct}{\mcitedefaultseppunct}\relax
\EndOfBibitem\bibitem{EliasMiro:2025msj}
J.~Elias~Mir{\'o} \& O.~Delouche,
\textit{``{Flowing from the Ising model on the fuzzy sphere to the 3D Lee-Yang
  CFT}''},
\doiref{10.1007/JHEP10(2025)037}{JHEP \textbf{2510}, 037
  (2025)\ignorespaces}\ignorespaces,
\texttt{\arxivref{2505.07655}{arXiv:2505.07655
  \![hep-th]}}\ignorespaces\mciteBstWouldAddEndPuncttrue
\mciteSetBstMidEndSepPunct{\mcitedefaultmidpunct\newline}
{\mcitedefaultendpunct}{\mcitedefaultseppunct}\relax
\EndOfBibitem\bibitem{Taylor:2025odf}
J.~Taylor, C.~Voinea, Z.~Papi{\'c} \& R.~Fan,
\textit{``{Conformal Scalar Field Theory from Ising Tricriticality on the Fuzzy
  Sphere}''},
\doiref{10.1103/cj3l-cf58}{Phys. Rev. Lett. \textbf{136}, 056503
  (2026)\ignorespaces}\ignorespaces,
\texttt{\arxivref{2506.22539}{arXiv:2506.22539
  \![cond-mat.str-el]}}\ignorespaces\mciteBstWouldAddEndPuncttrue
\mciteSetBstMidEndSepPunct{\mcitedefaultmidpunct\newline}
{\mcitedefaultendpunct}{\mcitedefaultseppunct}\relax
\EndOfBibitem\bibitem{Zhou:2025rmv}
Z.~Zhou, C.~Wang \& Y.-C. He,
\textit{``{Chern-Simons-matter conformal field theory on fuzzy sphere:
  Confinement transition of Kalmeyer-Laughlin chiral spin liquid}''},
\texttt{\arxivref{2507.19580}{arXiv:2507.19580
  \![cond-mat.str-el]}}\ignorespaces\mciteBstWouldAddEndPuncttrue
\mciteSetBstMidEndSepPunct{\mcitedefaultmidpunct\newline}
{\mcitedefaultendpunct}{\mcitedefaultseppunct}\relax
\EndOfBibitem\bibitem{Voinea:2025iun}
C.~Voinea, W.~Zhu, N.~Regnault \& Z.~Papi{\'c},
\textit{``{Critical Majorana Fermion at a Topological Quantum Hall Bilayer
  Transition}''},
\doiref{10.1103/mztz-fyk3}{Phys. Rev. Lett. \textbf{136}, 076601
  (2026)\ignorespaces}\ignorespaces,
\texttt{\arxivref{2509.08036}{arXiv:2509.08036
  \![cond-mat.str-el]}}\ignorespaces\mciteBstWouldAddEndPuncttrue
\mciteSetBstMidEndSepPunct{\mcitedefaultmidpunct\newline}
{\mcitedefaultendpunct}{\mcitedefaultseppunct}\relax
\EndOfBibitem\bibitem{Stergiou:2026rir}
A.~Stergiou,
\textit{``{Quantum Rotors on the Fuzzy Sphere and the Cubic CFT}''},
\texttt{\arxivref{2604.24840}{arXiv:2604.24840
  \![cond-mat.str-el]}}\ignorespaces\mciteBstWouldAddEndPuncttrue
\mciteSetBstMidEndSepPunct{\mcitedefaultmidpunct\newline}
{\mcitedefaultendpunct}{\mcitedefaultseppunct}\relax
\EndOfBibitem\bibitem{Bergholm:2018cyq}
V.~Bergholm et~al.,
\textit{``{PennyLane: Automatic differentiation of hybrid quantum-classical
  computations}''},
\texttt{\arxivref{1811.04968}{arXiv:1811.04968
  \![quant-ph]}}\ignorespaces\mciteBstWouldAddEndPuncttrue
\mciteSetBstMidEndSepPunct{\mcitedefaultmidpunct\newline}
{\mcitedefaultendpunct}{\mcitedefaultseppunct}\relax
\EndOfBibitem\bibitem{Mihalikova:2025ruh}
I.~Mih{\'a}likov{\'a}, J.~Carlson, D.~Neill \& I.~Stetcu,
\textit{``{State preparation and symmetries}''},
\texttt{\arxivref{2510.06702}{arXiv:2510.06702
  \![quant-ph]}}\ignorespaces\mciteBstWouldAddEndPuncttrue
\mciteSetBstMidEndSepPunct{\mcitedefaultmidpunct\newline}
{\mcitedefaultendpunct}{\mcitedefaultseppunct}\relax
\EndOfBibitem\bibitem{Kingma:2014}
D.~P. Kingma \& J.~Ba,
\textit{``{Adam: A Method for Stochastic Optimization}''},
\texttt{\arxivref{1412.6980}{arXiv:1412.6980
  \![cs.LG]}}\ignorespaces\mciteBstWouldAddEndPuncttrue
\mciteSetBstMidEndSepPunct{\mcitedefaultmidpunct\newline}
{\mcitedefaultendpunct}{\mcitedefaultseppunct}\relax
\EndOfBibitem\bibitem{Mottonen:2004qra}
M.~M{\"o}tt{\"o}nen, J.~J. Vartiainen, V.~Bergholm \& M.~M. Salomaa,
\textit{``{Quantum circuits for general multiqubit gates}''},
\doiref{10.1103/PhysRevLett.93.130502}{Phys. Rev. Lett. \textbf{93}, 130502
  (2004)\ignorespaces}\ignorespaces,
\texttt{\arxivref{quant-ph/0404089}{quant-ph/0404089}}\ignorespaces\mciteBstWouldAddEndPuncttrue
\mciteSetBstMidEndSepPunct{\mcitedefaultmidpunct\newline}
{\mcitedefaultendpunct}{\mcitedefaultseppunct}\relax
\EndOfBibitem\end{mcitethebibliography}

\end{document}